\theoremstyle{plain}
\newtheorem{thm}{Theorem}
\newtheorem{lem}{Lemma}
\newcommand{\ii}{\mathrm{i}}
\newcommand{\tr}{\mathrm{Tr}}
\newcommand{\fld}{\mathbb{Z}_{d}}
\newcommand{\lhv}{\mathrm{lhv}}
\begin{document}
\preprint{APS/123-QED}

\title{Robustly self-testing all maximally entangled states in every finite dimension}

\author{Uta Isabella Meyer}
\email{uta-isabella.meyer@lip6.fr}
\affiliation{Sorbonne Université, CNRS, LIP6, F-75005 Paris, France}
\author{Ivan Šupić}
\affiliation{Université Grenoble Alpes, CNRS, Grenoble INP, LIG, 38000 Grenoble, France}
\author{Frédéric Grosshans}
\affiliation{Sorbonne Université, CNRS, LIP6, F-75005 Paris, France}
\author{Damian Markham}
\affiliation{Sorbonne Université, CNRS, LIP6, F-75005 Paris, France}

\date{\today}

\begin{abstract}
We establish a device-independent, noise-tolerant certification of maximally entangled states in every finite dimension $d$.
The core ingredient is a $d$-input, $d$-outcome Bell experiment that generalizes the Clauser–Horne–Shimony–Holt test from qubits to qudits where each setting is a non-diagonal Heisenberg–Weyl observable.
For every odd prime $d\geq 3$, the associated Bell operator has an exact sum-of-positive-operators decomposition, yielding the Cirelson bound in closed form, from which we reconstruct the Heisenberg–Weyl commutation relations on the support of the state.
We then extend the Mayers-Yao local isometry from qubits to prime-dimensional systems and show that any $\epsilon$-near-optimal strategy below that bound is, up to local isometries, within trace distance $\delta = \mathcal{O}(\sqrt{\epsilon })$ of the ideal maximally entangled state; the implemented measurements are correspondingly close to the target observables.
Via a tensor-factor argument, the prime-dimension result extends the self-testing protocol to every composite dimension $d$.
The protocol uses standard Heisenberg–Weyl operations and non-Clifford phase gates that are diagonal in the computational basis, making it directly applicable to high-dimensional photonic and atomic platforms.
\end{abstract}
\maketitle

\paragraph*{Introduction.}

Bell's theorem reshaped our understanding of nature by demonstrating that quantum predictions fundamentally diverge from classical expectations, ruling out any explanation via local-hidden-variable (LHV) models \cite{bell1964einstein,RevModPhys.38.447}.
Today, Bell experiments not only serve as powerful tools to demonstrate nonlocality but also form the foundation for device-independent self-testing, enabling certification of quantum states and measurements purely from observed correlations \cite{mayers2003self,Supic2020selftestingof}.
Self-testing has emerged as a pivotal technique in quantum information science, with crucial implications for secure quantum communication~\cite{Pironio2010,Colbeck_2011,Arnon-Friedman2019}, delegated quantum computation~\cite{Reichardt2013} and computational complexity~\cite{ji2021mip}.

While self-testing is well-developed and robustly characterized for qubit systems, extending these certification protocols to higher-dimensional quantum systems (qudits) remains an open challenge. Qudits enable stronger dimension witnesses and richer forms of genuine multipartite entanglement \cite{GUHNE20091,PhysRevLett.100.210503,PhysRevLett.106.250404,brunner2014bell}, exhibit novel nonlocal features \cite{Collins2001,Kaszlikowski2002,Kaszlikowski2002-2,Acin2004,Ji2008,Liang2009,Gruca2011,Tang2013,Howard2015bell}, and support a variety of cryptographic and communication protocols \cite{Fitzi2001,Hadley2005,shaari2008blind,Keet2010,PhysRevA.78.012353,PhysRevLett.102.120501,Mackeprang2023}.
Experimentally, high-quality qudits have been realized using photonic path, time-bin and orbital-angular-momentum modes \cite{PhysRevA.55.2564,doi:10.1126/science.aar7053}.
Nonetheless, despite this promise, existing self-testing protocols for higher-dimensional systems either rely on decompositions into qubit subspaces~\cite{PhysRevA.87.050102,Coladangelo2017,PhysRevA.98.052115} or lack complete robustness analyses~\cite{Kaniewski2018,Cui2020generalizationof,Saha2020sumofsquares,Sarkar2021}.

This paper addresses this critical gap by introducing a robust device-independent self-testing protocol that certifies maximally entangled states across all finite dimensions.
We achieve this by generalizing the well-known Clauser–Horne–Shimony–Holt (CHSH)~\cite{chsh1969} test from qubits to prime-dimensional qudits using non-diagonal Heisenberg–Weyl observables, and building upon recent proposals introduced by the authors in previous work~\cite{meyer2024bell}.
By a tensor-factor argument, the protocol extends to every composite dimension.
We explicitly derive analytic robustness bounds, showing that correlations approaching the quantum Cirelson bound~\cite{Cirelson1980} enforce strict approximations of the ideal maximally entangled state and canonical Heisenberg–Weyl measurement operators.

\paragraph*{Previous works.}

Early works constructed tailored Bell inequalities or extended CHSH-type tests for higher dimensions~\cite{Coladangelo2017}, but these often required multiple correlation constraints or dimension-specific designs.
A breakthrough by Sarkar et al.~showed that the maximally entangled state in an arbitrary local dimension $d$ can in principle be certified using a single Bell scenario with two $d$-outcome measurements per party~\cite{Sarkar2021}, albeit without a rigorous robustness analysis.
A similar self-testing of maximally entangled states of local dimension $2^d$ for every $d \geq 1$, with two measurements per party has been reported in~\cite{Supic2021deviceindependent}, but robustness analysis lacks as well.
In fact, the constructions used in both of these works cannot be made robust by standard approach using triangle and Cauchy-Schwarz inequalities, as noted in~\cite{baroni2024quantumboundscompiledxor}.
Subsequent protocols achieved constant-sized self-tests in which the number of measurements does not grow with the number of inputs: Fu exhibited explicit fixed-size correlations self-testing an infinite family of maximally entangled states with unbounded local dimension via number-theoretic constructions~\cite{Fu2022constantsized}, and Man\v{c}inska et al.~developed a fully robust algebraic scheme that certifies every maximally entangled state of odd dimension (and even arbitrarily high-rank two-outcome measurements) using only four binary measurement settings~\cite{manvcinska2024constant}.
More recently, Vol\v{c}i\v{c} lifted the parity restriction by proving that four binary measurements per party suffice to self-test every maximally entangled state, extending the constant-size approach to all finite dimensions~\cite{volvcivc2024constant}.
In parallel, Chen et al.~introduced “post-hoc” iterative self-testing to certify additional observables, enabling device-independent verification of essentially any real projective measurement on the self-tested high-dimensional maximally entangled state~\cite{chen2024all}.
Each of these advances comes with trade-offs, some require tailored inequalities for each $d$ or number-theoretic conditions~\cite{Fu2022constantsized}, others were limited to either even or odd local dimensions~\cite{manvcinska2024constant}, or relied on idealized assumptions and lacked explicit error-tolerance, highlighting the challenge of a universal, robust high-dimensional self-test.
Our result overcomes these limitations, providing the first single protocol that is both fully robust and universally applicable to all local dimensions $d$.
In contrast to the earlier two-setting scheme of Sarkar et al.~\cite{Sarkar2021} which lacked a robustness guarantee, this work includes a rigorous error-bound analysis ensuring certification under experimental noise.
Unlike in our approach, one of the main challenges in developing robust self-tests for maximally entangled qudit states in~\cite{Fu2022constantsized,manvcinska2024constant,volvcivc2024constant} is identifying a universal, analytic form of measurements that achieve maximal violation across arbitrary dimensions.
The absence of such explicit analytic expressions complicates practical implementations of these self-tests, necessitating dimension-specific constructions or numerical optimization techniques.

\paragraph*{Nonlocality \& self-testing.}

A bipartite Bell experiment involves two parties, each receiving an input ($x$ and $y$) and producing an output ($a$ or $b$) from finite sets. After many repetitions, this process leads to an empirical distribution $p(a,b|x,y)$. We assume the no‑signalling condition, meaning that the marginal distribution of one party is independent of the other’s input $\sum_a p(a,b\vert x,y) = \sum_a p(a,b\vert x',y) $ and $\sum_b p(a,b\vert x,y) = \sum_b p(a,b\vert x',y) $ for all $a(b),x,x',y,y'$.
Nonlocality refers to the impossibility of reproducing such a distribution with any LHV model of the form $p(a,b\vert x,y) = \sum_{\lambda} p(\lambda) p(a\vert x,\lambda) p(b\vert y,\lambda)$.
Certain correlations arising from local measurements on entangled quantum states violate this classical description and are therefore nonlocal~\cite{GISIN1991201}.
The distinction from classical models is typically certified by a Bell inequality, i.e.~a bound on the expectation value of a Bell operator under LHV models.

Some quantum correlations go further: they are not only nonlocal but also unique, in the sense that no alternative Bell experiment can reproduce them, except up to irrelevant degrees of freedom such as local unitaries or actions on auxiliary subsystems.
When all experimental realizations of the observed correlations can be shown to be equivalent in this sense to a known reference experiment, one can conclude that the underlying state and measurements are uniquely characterized.
This property is called \textit{self-testing}~\cite{Supic2020selftestingof}. \\[0.1cm]

\paragraph*{Qudits.}

We consider finite dimensional quantum systems called qu$d$its, here, for odd prime dimensions $d$.
Qudits can be seen as a generalization of qubits $d=2$ to higher dimensions.
Their fundamental observables extend the qubit Pauli operators: the generalized shift operator $X$ and phase operator $Z$ act as $X \vert k \rangle = \vert (k+1)\,\mathrm{mod}\,d\, \rangle$ and $Z \vert k\rangle = \omega^k \vert k \rangle$ where $\omega = \exp (2 \pi \ii / d)$ is a primitive $d^{\text{th}}$ root of unity.
These operators are unitary with $X^{-1} = X^{d-1} = X^{\dagger}$ and $Z^{-1} = Z^{d-1} = Z^{\dagger}$ and they satisfy the twisted commutation relations \( Z^{z} X^{x} = \omega^{x z} X^{x} Z^{z}\) for $x,z \in \fld$.
In the following, all integers are elements of and all arithmetic are considered over the finite field $\fld$.
From $X$ and $Z$, one constructs the Heisenberg-Weyl (HW) displacement operators
\begin{equation} T_{(x,z)} = \omega^{2^{-1} xz} X^{x} Z^{z} \label{eq:displace} \end{equation}
are the generalized qudit Pauli operators.
Every qudit state $\rho$ has a unique representation in terms of its overlap with the HW displacement operators.
The so-called characteristic function is the Fourier transform of Gross' Wigner function~\cite{Gross2006} and is defined as
\begin{equation}
\chi[\rho]_{(u_{x},u_{z})} = \tr \left(T_{(u_{x},u_{z})}^{\dagger} \rho \right)\,.
\label{eq:qudit_characteristic}
\end{equation}

\paragraph*{Previous Bell experiment.}

We reintroduce our novel generalization of the well-known CHSH Bell test presented in~\cite{meyer2024bell}.
A quantum realization of the experiment consists of measurements in the eigenbases of the Heisenberg–Weyl (HW) displacement operators $T_{u}$ indexed by $u \in \fld^{2}$.
Specifically, each input specifies one of the mutually unbiased bases associated with the subgroups
\begin{equation}
\begin{aligned}
    \mathcal{G}_{k} &= \{ T_{(n,nk)},\,n\in \fld \}\,,~ k \in \fld \,,\\ \mathcal{G}_{d} &= \{ T_{(0,n)},\,n\in \fld \} \,.
\end{aligned}
\end{equation}
The initial state of the Bell experiment is the maximally entangled qudit Bell state locally rotated by a non-Clifford unitary $U$,
\begin{equation}
\vert \Phi \rangle = \frac{1}{\sqrt{d}} (U \otimes \mathds{1}) \sum_{k \in \fld} \vert k k \rangle \,.
\label{eq:bellstate}
\end{equation}
he presence of $U$ is essential, as it introduces the non‑classical features of the experiment~\cite{Spekkens2008,Howard2014,Delfosse2017}. Operationally, one may equivalently view this setup as performing HW measurements in rotated bases on the standard Bell state.

We choose to use a specific family of unitary operators
\begin{equation}
U_{\nu} = \sum_{k\in \fld} \omega^{\nu_{k}} \vert k \rangle \langle k \vert \,, \label{eq:magicunitary}
\end{equation}
parameterized by functions $\nu$ that generalize the qubit $\pi/8$-gat, introduced  by Howard and Vala~\cite{Howard2012}.
For $d>3$, the function $\nu$ is a polynomial of at least degree $\mathrm{deg}(\nu) = 3$.
With these choices, the Bell operator is
\begin{equation}
\mathcal{B}_{d}^{(\nu)} = \sum_{u,v \in \fld^{2}} \chi[ \left( U_{\nu} \otimes \mathds{1} \right) \vert \Phi \rangle ]_{(u,v)} \, T_{u} \otimes T_{v} \,,\label{eq:belloperatornu}
\end{equation}
where $\chi$ denotes the characteristic function. In Appendix~\ref{sec:character5}, we calculate explicitly that
\begin{equation}
\begin{aligned}
    &\chi[ \left( U \otimes \mathds{1} \right) \vert \Phi \rangle]_{(x_{1},z_{1}),(x_{2},z_{2})} \\ 
    &= \frac{1}{d}\, \delta_{x_{1} = x_{2}} \sum_{s \in \fld} \omega^{ (z_{1} + z_{2})s + \nu_{s+ 2^{-1}x_{1}} -\nu_{s- 2^{-1}x_{1}}} \,.
\end{aligned}
\end{equation}
Although our analysis allows general functions $\nu$, Howard and Vala identified a minimal family where $\nu$ is a degree-three polynomial~\cite{Howard2012}, and this specual case is detailed in Appendix\:\ref{sec:decomp}.
  
The expectation value of the Bell operator on the maximally entangled state is $\langle \Phi \vert \mathcal{B}_{d}^{(\nu)} \vert \Phi \rangle = d^{2}$.
In~\cite{meyer2024bell}, we conjectured, and verified numerically for $5 \leq d \leq 23$, that any LHV model satisfies $\langle \mathcal{B}_{d}^{(\nu)} \rangle_{\lhv} < d^2 = \langle \Phi \vert \mathcal{B}_{d}^{(\nu)} \vert \Phi \rangle $ for all $d$ odd prime~\cite{meyer2024bell}.\\[0.1cm]

\paragraph*{Present Bell experiment.}

In the present work, we consider a refinement of the above Bell experiment, obtained by omitting the measurements in the $Z$ basis, i.e. those associated with the subgroup $\mathcal{G}_{d}$.
This leaves the scenario with $d$ inputs and $d$ outputs.
The reason is that the non-Clifford unitaries~\eqref{eq:magicunitary} - essential for non-classical correlations - are diagonal in the $Z$ basis~\cite{Howard2012}.
Retaining this setting would therefore restrict the state to a stabilizer subspace, since $Z \otimes Z^{\dagger}$ is still a stabilizer element of the state~\eqref{eq:bellstate}.
Such measurements admit perfect classical simulation~\cite{Howard2013}, and thus contribute nothing to certifying nonlocality or enabling self‑testing.

To accommodate the missing basis, we rewrite $T_{(0,z)}$ as $\omega^{-2^{-1}z} T_{(-1,0)} T_{(1,z)}$.
The resulting Bell operator $\mathcal{B}_{d}$ is then composed exclusively of non‑diagonal terms.
In fact, the Bell operator~\eqref{eq:belloperatornu} can be written as a sum of three independent terms $\mathcal{B}_{d}^{(\nu)} = \mathds{1} + \mathcal{S} + \mathcal{B}_{d}$.
The purely diagonal (non-trivial) elements are combined to
\begin{equation}
\mathcal{S} = \sum_{j \in \fld^{\ast}} A_{0,j} \otimes B_{0,-j} \,. \label{eq:belloperatorstabself}
\end{equation}
and the remaining non‑diagonal contributions form
\begin{equation}
\mathcal{B}_{d} = \sum_{n \in \fld^{\ast}} \sum_{j,k \in \fld} g(j,k,n) \,A_{j}^{n} \otimes B_{k}^{n} \,,
\label{eq:belloperator5self}
\end{equation}
with
\begin{equation}
g(j,k,n) = \frac{1}{d} \sum_{s \in \fld} \omega^{ n(j + k) s + \nu_{s+ 2^{-1}n} -\nu_{s- 2^{-1}n}} \,.
\label{eq:gdefinition}
\end{equation}
For the state\,\eqref{eq:bellstate}, the operator's expectation value is $\langle \Phi \vert \mathcal{B}_{d} \vert \Phi \rangle = d(d-1)$.
A key property for our robustness analysis is that $d(d-1) \mathds{1} - \mathcal{B}_{d}$ admits an explicit sum-of-positive-operators (SOPO) decomposition:
\begin{equation}
d(d-1) \mathds{1} - \mathcal{B}_{d} = \sum_{n = 1}^{\tfrac{d-1}{2}} \sum_{j \in \fld} C_{n,j}^{\dagger} C_{n,j}^{\phantom{\dagger}} \,,
\label{eq:decompA}
\end{equation}
where 
\begin{equation}
C_{n,j} := A_{j}^{n} - \sum_{ k \in \fld } \,g(j,k,n)^{\ast}\, \left( B_{k}^{n} \right)^{\dagger} \,,
\end{equation}
and $C_{n,j}^{\dagger} = C_{-n,j}^{\phantom{\dagger}}$.
This decomposition is verified explicitly in Appendix\:\ref{sec:decomp}.

The case $d=3$ requires a separate treatment.
Both the choice of $\nu$ and the form of the Bell operator differ, requiring a more delicate analysis to establish the self‑test. Specifically, one considers the operator
\begin{equation}
\label{eq:BellOperator}
\begin{aligned} \mathcal{B}_{3} &=\frac{1}{\sqrt{3}}\,\big(e^{\ii\phi_{1}} A_{0} \otimes B_{0} + e^{\ii\phi_{2}} A_{0} \otimes B_{1} + e^{\ii\phi_{1}} A_{0} \otimes B_{2}\\[-0.25cm]
&\phantom{\frac{1}{\sqrt{3}}}+ e^{\ii\phi_{2}} A_{1} \otimes B_{0} + e^{\ii\phi_{1}} A_{1} \otimes B_{1} + e^{\ii\phi_{1}} A_{1} \otimes B_{2}\\[-0.25cm]
&\phantom{\frac{1}{\sqrt{3}}}+ e^{\ii\phi_{1}} A_{2} \otimes B_{0} + e^{\ii\phi_{1}} A_{2} \otimes B_{1} + e^{\ii\phi_{2}} A_{2} \otimes B_{2}
\big)+h.c. \, \end{aligned} 
\end{equation}
where the phases $\phi_{1},\phi_{2}$ are constrained by
\begin{equation}
\phi_{1}-\phi_{2} \equiv \frac{2\pi}{3}\,, \qquad 1 - \sqrt{3} e^{3 \ii \phi_{1}} = \omega \,.
\label{eq:phaseDifference}
\end{equation}
Note that the Bell operator\,\eqref{eq:BellOperator} can be written in the form of Eq.\,\eqref{eq:belloperator5self}, and there exists a Bell operator $B_{\nu} = \mathds{1} + \mathcal{S}_{3} + \mathcal{B}_{3}$ that has the form of Eq.\,\eqref{eq:belloperatornu}.
The operator $\mathcal{S}_{3} = A_{3}^{\phantom{\dagger}} \otimes B_{3}^{\phantom{\dagger}} + A_{3}^{\dagger} \otimes B_{3}^{\dagger} = 2\mathds{1} - (A_{3}^{\dagger} - B_{3}^{\phantom{\dagger}})(A_{3}^{\phantom{\dagger}} - B_{3}^{\dagger})\,,\label{eq:BellOperatorStab3}$ is again diagonal in the computation basis and part of the present Bell experiment.
Unlike for the case $d > 3$, $\nu$ cannot be a polynomial in the finite field.
Instead, a possible choice is $\phi_{1} = -\pi/18$ and $\phi_{2} = -13\pi/18$, for which $\langle \mathcal{B}_{3} \rangle_{\lhv} < 5.640 $.
Given the qutrit Bell state, it is $\langle \Phi \vert \mathcal{B}_{3} \vert \Phi \rangle = 6 $.

Moreover, there exist a SOPO decomposition $6 \mathds{1} - \mathcal{B}_{3} = C_{0}^{\dagger} C_{0}^{\phantom{\dagger}} + C_{1}^{\dagger} C_{1}^{\phantom{\dagger}} + C_{2}^{\dagger} C_{2}^{\phantom{\dagger}}$ with
\begin{equation}
C_{j} := A_{j} - \frac{1}{\sqrt{3}} \sum_{k=0,1,2} e^{-\ii \phi_{j,k}} B_{k}^{\dagger} \,, \label{eq:sos3-j}
\end{equation}
where $\phi_{0,1} = \phi_{1,0} = \phi_{2,2} = \phi_{2}$, and otherwise $\phi_{j,k}= \phi_{1}$.\\[0.1cm]

\paragraph*{Main result}

We now state and prove our main theorem, which establishes that the maximally entangled qudit state and the canonical Heisenberg–Weyl measurements can be robustly self-tested from the maximal violation of our Bell inequality.
\begin{thm}
\label{th:main-ideal}
Given unitary operators $A_{j} \in \mathcal{H}_A$, $B_{j} \in \mathcal{H}_B$ for $j=0,\dots,d-1$ that constitute the operator $\mathcal{B}_{d}$ in Eq.\,\eqref{eq:belloperator5self} when $d > 3$, and the operator $\mathcal{B}_{3}$ in Eq.\,\eqref{eq:BellOperator} if $d=3$.
Suppose futher that $\vert \psi \rangle \in \mathcal{H}_A \otimes \mathcal{H}_B$ is a bipartite state such that $\vert \langle \psi \vert \mathcal{B}_{d} \vert \psi \rangle - d(d-1) \vert \leq \epsilon$.
Then, there exist local unitaries $V_{A} : \mathcal{H}_{A} \rightarrow \mathbb{C}^d \otimes \mathcal{H}_{A'}$ and $V_{B} : \mathcal{H}_{B} \rightarrow \mathbb{C}^d \otimes \mathcal{H}_{B'} $ together with an error function $\delta(\epsilon) = \sqrt{\epsilon} \, d(d-1) (\mu_{d}(4+1/d) + 1)$ such that
\begin{equation}
\Vert \left( V_{A}\otimes V_{B} \right) (\vert \psi \rangle ) - \vert\Phi \rangle \otimes \vert aux \rangle \Vert \leq \delta,
\end{equation}
where $\vert\Phi \rangle$ is the maximally entangled state as in Eq.\,\eqref{eq:bellstate}, $\vert aux \rangle \in \mathcal{H}_{A'} \otimes \mathcal{H}_{B'} $ is an auxiliary state, $\mu_{3} = 9\sqrt{\epsilon} (\sqrt{3}+2)$, and $\mu_{d} = \sqrt{d}\big( \sqrt{d} + 2 \big)$ when $d>3$.
For $d=3$, we further assume that $\epsilon < (72(\sqrt{3}+2))^{-2}$.
Moreover, under these unitaries the observables are mapped as  $$V_{A} (A_{j} ) \rightarrow T_{(1,j)} \otimes \mathds{1}_{A'}, \qquad V_{B} (B_{j}) \rightarrow T_{(1,j)} \otimes \mathds{1}_{B'},$$ where $T_{(1,j)}$ are the canonical Heisenberg-Weyl operators.
The auxiliary spaces $\mathcal{H}_{A'}$ and $\mathcal{H}_{B'}$ may have arbitrary finite dimensions.
\end{thm}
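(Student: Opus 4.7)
The plan is to use the sum-of-positive-operators decomposition in Eq.\,\eqref{eq:decompA} as the primary engine. Since $d(d-1)\mathds{1}-\mathcal{B}_{d} = \sum_{n,j} C_{n,j}^{\dagger}C_{n,j}^{\phantom{\dagger}}$ is a sum of positive terms and the near-optimality hypothesis gives $\langle\psi|(d(d-1)\mathds{1}-\mathcal{B}_{d})|\psi\rangle \leq \epsilon$, each individual summand satisfies $\|C_{n,j}|\psi\rangle\|^{2}\leq \epsilon$ and hence the operator-level near-annihilation $\|C_{n,j}|\psi\rangle\|\leq\sqrt{\epsilon}$. Substituting the definition of $C_{n,j}$ turns this into the approximate algebraic identities $A_{j}^{n}|\psi\rangle \approx \sum_{k} g(j,k,n)^{\ast}(B_{k}^{n})^{\dagger}|\psi\rangle$ in norm, which are the device-independent analogues of the way the Heisenberg–Weyl operators act on the ideal Bell state.

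Second, I would bootstrap these near-kernel relations into approximate Heisenberg–Weyl composition and twisted commutation rules on the support of $|\psi\rangle$, namely $A_{j}A_{k}|\psi\rangle \approx \omega^{\text{phase}}A_{j+k}|\psi\rangle$ and $A_{j}^{n}\otimes\mathds{1}|\psi\rangle \approx \omega^{\text{phase}}\mathds{1}\otimes (B_{k}^{n})^{\dagger}|\psi\rangle$ type relations, together with their Fourier duals over the $n$ index. Each such approximate identity is obtained by composing two or three of the primitive near-annihilations and invoking the triangle inequality together with the fact that the $A_{j},B_{k}$ are unitary and therefore norm-preserving; the combinatorics of this bootstrap feeds into the prefactor $\mu_{d}=\sqrt{d}(\sqrt{d}+2)$ appearing in the statement.

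Third, I would construct the local isometry $V_{A}\otimes V_{B}$ by generalizing the Mayers–Yao SWAP from qubits to prime dimension. One introduces a fresh qudit ancilla on each side prepared in a fiducial state $|0\rangle$, applies a qudit Fourier gate, then a controlled-$A_{j}$ on Alice's side (and controlled-$B_{j}$ on Bob's) acting on the ancilla, followed by an inverse Fourier and a further controlled gate in the dual basis, mirroring the standard qubit circuit. Plugging the approximate Heisenberg–Weyl relations from the previous step into this circuit term by term shows that applying the isometry to $|\psi\rangle$ yields $|\Phi\rangle\otimes|aux\rangle$ up to an accumulated norm error bounded by a constant multiple of $\sqrt{\epsilon}\,d(d-1)$. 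A standard vectorization identity then converts operator-norm control on the SWAP components into the observable statement $V_{A}(A_{j})\rightarrow T_{(1,j)}\otimes\mathds{1}_{A'}$ and likewise for $V_{B}(B_{j})$ on the image of the state.

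Finally, the $d=3$ case requires a dedicated argument because $\nu$ is not polynomial over $\mathbb{Z}_{3}$ and the Bell operator takes the explicit nine-term form of Eq.\,\eqref{eq:BellOperator}; here I would apply the three-term SOPO in Eq.\,\eqref{eq:sos3-j} with the phases fixed by Eq.\,\eqref{eq:phaseDifference} and push the same isometry template through, tracking constants to account for the separate prefactor $\mu_{3}=9\sqrt{\epsilon}(\sqrt{3}+2)$ and the small-$\epsilon$ hypothesis $\epsilon < (72(\sqrt{3}+2))^{-2}$ that ensures a certain near-identity remains invertible. The main obstacle, I expect, is not any isolated algebraic identity but the careful bookkeeping of error propagation across the chain of near-identities: in arbitrary prime dimension the number of primitive approximations that must be composed grows with $d$, and each triangle or Cauchy–Schwarz step introduces a $\sqrt{\epsilon}$ factor that has to be bounded tightly enough to keep the final estimate linear in $\sqrt{\epsilon}$ with only the polynomial prefactor $\mu_{d}$ stated in the theorem.
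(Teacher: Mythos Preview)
Your high-level architecture (SOPO $\Rightarrow$ near-annihilation $\|C_{n,j}|\psi\rangle\|\leq\sqrt{\epsilon}$ $\Rightarrow$ approximate HW algebra on $|\psi\rangle$ $\Rightarrow$ qudit SWAP isometry) matches the paper, and your first and third steps are essentially right. The gap is in the second step. The near-annihilation only gives $A_j^n|\psi\rangle\approx\sum_k g(j,k,n)^*(B_k^n)^\dagger|\psi\rangle$: Alice's operator equals a \emph{sum} over all of Bob's. Composing via $A_j^nA_j^{n'}=A_j^{n+n'}$ and the triangle inequality, as you propose, yields only a linear system with $d$ equations (one per $j$) in the $d^2$ vectors $B_k^nB_l^{n'}|\psi\rangle$---not the pointwise relations $B_kB_l|\psi\rangle\approx\omega^{2^{-1}(k-l)}B_{2^{-1}(k+l)}^2|\psi\rangle$ that the isometry computation actually consumes (note also that the target is $B_{2^{-1}(k+l)}^2$, not $B_{k+l}$). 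The paper closes this by proving the coefficient matrix $G_n=(g(j,k,n))_{j,k}$ is unitary and inverting the system; this inversion is where the degree-$\geq 3$ condition on $\nu$ (i.e.\ the non-Cliffordness of $U_\nu$) enters, and it cannot be replaced by ``triangle inequality plus unitarity of $A_j,B_k$''. The same inversion, together with the operator-norm estimate $\|P_{n,j}\|\leq\sqrt{d}$ valid for cubic $\nu$, is what produces the stated prefactor $\mu_d=\sqrt{d}(\sqrt{d}+2)$.

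For $d=3$ the obstruction is structural, not cosmetic. Because $2\equiv-1\pmod 3$, the constraint $A_jA_j=A_j^2=A_j^\dagger$ is not independent of its adjoint, the linear system above is under-determined, and the inversion route fails outright. The paper does not ``push the same isometry template through'' with adjusted constants: it substitutes an entirely different argument, first deriving anticommutator relations $(\{B_i^2,B_j^2\}+B_k)|\psi\rangle\approx 0$, then showing the group commutator $Q=B_0B_1B_0^\dagger B_1^\dagger$ is approximately central and an approximate cube root of unity, and finally excluding $Q\approx\mathds{1}$ by contradiction with the bound $\|(\{B_0,B_1\}+B_2^2)|\psi\rangle\|\leq 3\sqrt{\epsilon}(\sqrt{3}+2)$. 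The smallness hypothesis $\epsilon<(72(\sqrt{3}+2))^{-2}$ is precisely what makes the three spectral disks of $Q$ around $1,\omega,\omega^2$ disjoint so that this exclusion argument goes through; it has nothing to do with keeping a near-identity invertible.
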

We first present the proof in the ideal case, $\delta = \epsilon = 0$, where the argument is somewhat cleaner.
The robustness, given in Appendix~\ref{sec:robust}, follows the same structure: each algebraic step is reproduced with error bounds, leading to the stated dependence on $\epsilon$.
Both the ideal and robust proofs have a two-step structure:
\begin{enumerate}
    \item \textit{Twisted commutation relations.} We show that the optimal violation forces the measurement operators to satisfy twisted commutation relations, characteristic of the HW algebra (Lemma~\ref{lem:com-ideal} for the ideal case).
    \item \textit{Isometry construction.} Using these relations, we explicitly build a local isometry that extracts the maximally entangled state and maps the physical observables to their canonical HW counterparts. 
\end{enumerate}

\begin{lem}
\label{lem:com-ideal}
Let $A_{j}$, $B_{j}$  and $\ket{\psi}$ be as in Theorem~\ref{th:main-ideal}, and assume 
 $\langle \psi \vert \mathcal{B}_{d} \vert \psi \rangle = d(d-1)$. Then for all$j,k \in \fld$ the following twisted commutation relations hold:
\begin{equation}
\begin{aligned}
    \left( A_{j} A_{k} - \omega^{j-k} A_{k} A_{j} \right) \vert \psi \rangle &= 0\,,\\
    \left( B_{j} B_{k}- \omega^{j-k} B_{k} B_{j} \right) \vert \psi \rangle &= 0\,.
\end{aligned}
\label{eq:lemcomrel}
\end{equation}
\end{lem}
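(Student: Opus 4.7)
My plan centers on the sum-of-positive-operators (SOPO) decomposition in Eq.~\eqref{eq:decompA}. Because $d(d-1)\mathds{1}-\mathcal{B}_d = \sum_{n,j}C_{n,j}^\dagger C_{n,j}$ is a sum of positive-semidefinite operators, the saturation hypothesis $\langle\psi|\mathcal{B}_d|\psi\rangle = d(d-1)$ immediately forces $\|C_{n,j}|\psi\rangle\|^2 = 0$ for every $n\in\{1,\ldots,(d-1)/2\}$ and $j\in\fld$. Written out, this yields the ``substitution identities''
\[
A_j^n|\psi\rangle = \sum_{k\in\fld} g(j,k,n)^{*}\,(B_k^\dagger)^n|\psi\rangle,
\]
complemented by their $A\leftrightarrow B$ mirror, available because $g(j,k,n)=g(k,j,n)$ makes $\mathcal{B}_d$ symmetric under party exchange. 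Using $C_{n,j}^\dagger=C_{-n,j}$ together with $g(j,k,-n)^{*}=g(j,k,n)$, these extend to all $n\in\fld^{\ast}$.

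With the $n=1$ identity I would shuttle operators across the bipartition to rewrite $A_jA_k|\psi\rangle$: apply the substitution to the inner $A_k$, commute the resulting $B_\ell^\dagger$ past $A_j$ using $[A_j,B_\ell]=0$, and apply the substitution to $A_j$. The same for $A_kA_j|\psi\rangle$ with the roles of $j,k$ swapped. Subtracting and using the specific form $g(j,k,1)=f(j+k)$ inherited from the Fourier representation of $\nu$ gives
\[
(A_jA_k - \omega^{j-k}A_kA_j)|\psi\rangle = \sum_{\ell,m}P_{j,k}(\ell,m)\,B_\ell^\dagger B_m^\dagger|\psi\rangle
\]
for an explicit kernel $P_{j,k}$. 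Passing to the Fourier basis on Bob's side with $\hat B_s := \sum_\ell\omega^{\ell s}B_\ell^\dagger$, a short calculation shows that the transformed kernel $\hat P_{j,k}(s,s')$ factors as a phase times a bracket of the form $\omega^{-(j-k)(s-s'+1)}-1$, which vanishes precisely on the line $s'=s+1$.

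It remains to show that the complementary Bob bilinears $\hat B_s\hat B_{s'}|\psi\rangle$ with $s'\neq s+1$ vanish on $|\psi\rangle$; this is exactly what the $n=2$ substitution identity delivers. Fourier-transforming both sides of $A_j^2|\psi\rangle = \sum_k g(j,k,2)^{*}(B_k^\dagger)^2|\psi\rangle$ in $j$ turns the left-hand side, via the $n=1$ relation $\hat A_m|\psi\rangle=\omega^{-\phi_m}\hat B_{-m}|\psi\rangle$, into a convolution of $\hat B_s\hat B_{s'}|\psi\rangle$'s on each antidiagonal $s+s'=\mathrm{const}$, while the right-hand side collapses onto a single squared-$B^\dagger$ term—forcing the bilinear support to concentrate on exactly the same line $s'=s+1$ picked out by the kernel zeros. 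Every surviving term in the commutator expression therefore has either a zero kernel coefficient or a zero bilinear, so the sum vanishes; the Bob-side commutation follows from the $A\leftrightarrow B$ symmetry. For $d=3$, where the SOPO takes the form~\eqref{eq:sos3-j} with only three operators $C_j$ parameterized by the explicit phases $\phi_1,\phi_2$ satisfying~\eqref{eq:phaseDifference}, the same shuttling strategy applies but the alignment between the kernel zero-locus and the $n=2$-constrained bilinears must be verified directly from the phase data. The main obstacle is precisely this alignment step: producing the exact phase $\omega^{j-k}$ requires combining the $n=1$ and $n=2$ SOPO constraints together with the specific Fourier/polynomial structure of $\nu$, not any single substitution identity in isolation.
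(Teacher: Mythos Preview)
Your approach for $d>3$ is essentially the paper's argument in a Fourier disguise. Both of you start from the SOPO constraints $C_{n,j}|\psi\rangle=0$, push Alice operators to Bob via the $n=1$ identity, and then invoke the $n=2$ identity together with $A_j\cdot A_j=A_j^2$ to pin down the Bob bilinears. The paper does this in the standard basis and lands on the ``half-commutation'' $B_kB_l|\psi\rangle=\omega^{2^{-1}(k-l)}B^2_{2^{-1}(k+l)}|\psi\rangle$, then reads off the twisted commutator; you instead target $A_jA_k-\omega^{j-k}A_kA_j$ directly and diagonalize the kernel by passing to $\hat B_s$. Your kernel computation is correct (up to a sign convention: the bracket vanishes on $s'=s-1$, not $s+1$), and the bilinear-support claim is equivalent to the paper's half-commutation relation. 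What you skip is the same step the paper handles via the quadratic folding identity for $g(j,k,n)$ together with a rank argument on the matrix $(g(j,k,1)g(j,l,1))_{j,(k,l)}$: Fourier-transforming the $n=2$ relation in $j$ gives only one equation \emph{per antidiagonal} $s+s'=\text{const}$, which is $d$ equations for $d^2$ bilinears---not enough by itself to ``force the support onto a line''. You need the explicit identification of the particular solution (via the folding identity) plus the invertibility/unitarity of $G_n=(g(j,k,n))_{j,k}$, which the paper spells out and you do not.

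The genuine gap is $d=3$. Your proposal says the same shuttling strategy applies and the alignment ``must be verified directly from the phase data,'' but the paper explains precisely why this fails: since $2\equiv-1\pmod 3$, the constraints $A_jA_j=A_j^2$ and $A_j^2=A_j^\dagger$ collapse, the linear system \eqref{eq:linsys} becomes under-determined, and no amount of Fourier bookkeeping on the $n=1,2$ identities produces enough independent equations. The paper's proof for $d=3$ takes a completely different route: from the SOPO constraints it extracts the anti-commutator relations $\{B_i^2,B_j^2\}+B_k=0$ (cyclic), shows the commutator element $Q=B_0B_1B_0^\dagger B_1^\dagger$ is central (commutes with all $B_i$), hence a scalar $\lambda\mathds{1}$ with $\lambda^3=1$, and finally rules out $\lambda=1$. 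None of this structure appears in your sketch, and your $n=2$ strategy cannot substitute for it.
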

For the proof, we use the SOPO decomposition\,\eqref{eq:decompA} to obtain a relation between the $A_{j}$ and $B_{k}$ with respect to their action on the state $\vert \psi \rangle$.
And then deduce the commutation relation as the only solution to the restrictions $A^{2}_{j} = (A_{j})^{2}$ for all $j \in \fld$.
\begin{proof}[Proof of Lemma~\ref{lem:com-ideal} for $d > 3$]
Assume unitary operators $A_{j}, B_{j}$ that constitute the Bell operator\,\eqref{eq:belloperator5self} and a state $\vert \psi \rangle$ such that $\langle \psi \vert \mathcal{B}_{d} \vert \psi \rangle = d(d-1)$.
This implies $\langle \psi \vert d(d-1) - \mathcal{B}_{d} \vert \psi \rangle = 0$.
Since the SOPO decomposition~\eqref{eq:decompA} expresses this operator as a sum of positive terms, it follows that every term must vanish individually.
In particular, $\langle \psi \vert C_{n,j}^{\dagger} C_{n,j}^{\phantom{\dagger}} \vert \psi \rangle =0$, which implies $C_{n,j}\vert \psi \rangle = 0$ for all $n \in \fld^{\ast}$ and $j \in \fld$. Explicitly,
\begin{equation}
A_{j}^{n} \vert \psi \rangle = \sum_{ k \in \fld } \,g(j,k,n)^{\ast} \left( B_{k}^{n} \right)^{\dagger}\vert \psi \rangle \,.
\label{eq:abrelation}
\end{equation}

It generally holds that $A_{j}^{n} A_{j}^{n'} = A_{j}^{n+n'}$ for all $n,n',j \in \fld$, and, in particular, $A_{j}^{d} = \mathds{1}$, as well as
\begin{equation}
A_{j} A_{j} = A_{j}^{2} \,. \label{eq:aarelation}
\end{equation}
Taking the adjoint of Eq.\,\eqref{eq:aarelation}, we have $\left( A_{j}\right)^{\dagger} \left( A_{j}\right)^{\dagger} = \left( A_{j}^{2} \right)^{\dagger}$.
Combining this with Eq.\,\eqref{eq:abrelation} and Eq.\,\eqref{eq:aarelation} leads to the linear system of equations
\begin{equation}
\sum_{ k,l \in \fld } \, g(j,k,1) g(j,l,1) \, B_{k} B_{l} \vert \psi \rangle = \sum_{ k \in \fld } \, g(j,k,2) B_{k}^{2} \vert \psi \rangle, \label{eq:linsys}
\end{equation}
valid for all $j \in \fld$.
A solution of this system is
\begin{equation}
B_{k} B_{l} \vert \psi \rangle = \omega^{2^{-1} ( k - l)} B_{2^{-1}(k+l)}^{2} \vert \psi \rangle \,,
\end{equation}
as shown in Appendix~\ref{sec:comrel}.

Up to a global phase, this solution is unique.
Expand every grade-$2$ vector in the orthonormal basis $\{B_{m}^{2}\lvert\psi\rangle,\,m\in\fld\}$, $B_{k}B_{l} \vert\psi\rangle = \sum_{m\in\fld} C_{kl}^{m}\,B_{m}^{2} \vert \psi \rangle$, insert this into Eq.\,\eqref{eq:linsys}, and project with $\langle\psi \vert ( B_{n}^{2} )^{\dagger}$.
For all $j,n\in\fld$, one obtains
\begin{equation}
\sum_{k,l \in \fld} C_{kl}^{n} \; g(j,k,1)\, g(j,l,1) = g(j,n,2)\,.\label{eq:linearsystemmatrix}
\end{equation}
Collecting the $C_{kl}^{n}$ in a column vector $\mathbf c^{(n)}$ and the products $g(j,k,1)g(j,l,1)$ in the rows of a $d\times d^{2}$ matrix $M$, Eq.\,\eqref{eq:linearsystemmatrix} reads $M \mathbf{c}^{(n)}=\mathbf g^{(2)}_{n}$.
Because $g(j,k,1)$ is the $(j+k)$-th Fourier coefficient of a non-linear phase ($\nu$ is not a polynomial over the finite field with degree 2 or lower), the rows of $M$ are orthogonal, $\sum_{j\in \fld}g(j,k,1)g(j,l,1)^{\ast}= d \, \delta_{k,l}$, see Appendix\:\ref{sec:comrel}. 
As a result, $M$ has full row rank $d$.
Because the matrix is of maximal rank, the solution is uniquely determined (up to an overall phase), $C_{kl}^{n} =\omega^{2^{-1}(k-l)}\,\delta_{n,2^{-1}(k+l)}$.
Were $\nu$ quadratic or linear, $M$ would drop rank, extra solutions would arise, and $\vert \psi\rangle$ would degenerate into a stabilizer state - contradicting its certified non-classicality under HW group~\cite{Howard2013}.
Finally, it follows that $B_{k} B_{l} \vert \psi \rangle= \omega^{ k - l} B_{l} B_{k} \vert \psi \rangle $ for all $k,l \in \fld$.

To obtain $A_{k} A_{l} \vert \psi \rangle = \omega^{ k - l} A_{l} A_{k} \vert \psi \rangle$ for all $k,l \in \fld$, note that the Bell operator\,\eqref{eq:belloperator5self} is symmetric under swapping the indices of the local operators.
One can also repeat the above using another SOPO decomposition $d(d-1) \mathds{1} - \mathcal{B}_{d} = \sum_{n = 1}^{(d-1)/2} \sum_{j \in \fld} D_{n,j}^{\dagger} D_{n,j}^{\phantom{\dagger}}$ with $ D_{n,j} := B_{j}^{n} - \sum_{ k \in \fld } g(j,k,n) \left( A_{k}^{n} \right)^{\dagger} $.
\end{proof}

For the case $d = 3$, the reasoning in the above proof does not apply because $2 \equiv -1 \,(\bmod \, 3)$, and the relations $A_{j} A_{j} = A_{j}^{2} = A_{j}^{\dagger}$ are not independent.
The linear system\,\eqref{eq:linsys} is therefore under-determined.
Instead, we derive an alternative set of equations involving symmetric operator combinations $\{B_{j},B_{k}\} = B_{j}B_{k} + B_{k} B_{j}$ for $j,k \in \{ 0,1,2 \}$.
\begin{proof}[Proof of Lemma~\ref{lem:com-ideal} for $d = 3$]
Under the assumptions of Lemma~\ref{lem:com-ideal} using the Bell operator\,\eqref{eq:BellOperator}, it is $\langle \psi \vert 6 - \mathcal{B}_{3} \vert \psi \rangle = 0$.
As a consequence, $C_{j} \vert \psi \rangle = 0$, relating the operators $A_{j}$ to the $B_{k}$.
Additionally, it must hold $A_{j}^{\dagger}=A_{j} A_{j}$ for $j \in \{ 0,1,2 \}$.
Writing these constraints out gives a linear system of equations that can be simplified to
\begin{equation}
\begin{aligned}
\left[ \{B_{0}^{2},B_{1}^{2}\} + B_{2} \right] \vert \psi \rangle &= 0 \,,\\
\left[ \{B_{0}^{2},B_{2}^{2}\} + B_{1} \right] \vert \psi \rangle &= 0 \,,\\
\left[ \{B_{1}^{2},B_{2}^{2}\} + B_{0} \right] \vert \psi \rangle &= 0 \,.
\end{aligned}
\label{eq:bconditions}
\end{equation}
In Appendix\:\ref{sec:comrel}, it is shown that the commutation elements $Q:=B_{0}\,B_{1}\,B_{0}^{\dagger}B_{1}^{\dagger}$, $Q^{\prime} = B_{2} B_{0} B_{2}^{\dagger} B_{0}^{\dagger}$, $Q^{\prime\prime} = B_{1} B_{2} B_{1}^{\dagger} B_{2}^{\dagger}$ are central: they commute with  $B_{i}$ for $i = 0,1,2$.
Hence they must be scalar multiples of the identity, $Q^{('('))}=\lambda\mathds{1}$ with $\lambda^{3}=1$ and $\lambda\neq 1$.
Choosing the orientation \(0 \rightarrow 1 \rightarrow 2 \rightarrow 0\) sets \(\lambda=\omega\) (the opposite orientation would give \(\omega^{2}\)).
With \(Q^{('('))} = \omega \mathds{1}\) the defining equation becomes the $\omega$-commutation rules
\begin{equation}
B_{0}B_{1}=\omega B_{1}B_{0},\quad
B_{1}B_{2}=\omega B_{2}B_{1},\quad
B_{2}B_{0}=\omega B_{0}B_{2}\,.
\label{eq:heisenbergweylalgebra}
\end{equation}
which generate the Heisenberg-Weyl algebra in dimension $3$.
Using these $\omega$-commutation relations, we can straightforwardly deduce the same relations for the operators $A_{0}, A_{1}, A_{2}$.
Likewise, it is possible to deduce them independently, but in the exact same fashion as for the $B_{0},B_{1},B_{2}$, using the SOPO decomposition $6 \mathds{1} - \mathcal{B}_{3} = D_{0}^{\dagger} D_{0}^{\phantom{\dagger}} + D_{1}^{\dagger} D_{1}^{\phantom{\dagger}} + D_{2}^{\dagger} D_{2}^{\phantom{\dagger}} $ with $D_{j} := B_{j} - \tfrac{1}{\sqrt{3}} \sum_{k=0,1,2} e^{-\ii \phi_{j,k}} A_{k}^{\dagger}$.
\end{proof}

\begin{proof}[Proof of Theorem~\ref{th:main-ideal}]
Under the assumptions of Theorem~\ref{th:main-ideal}, we use the SOPO~decomposition\,\eqref{eq:decompA} such that Eq.\,\eqref{eq:abrelation} holds and Lemma~\ref{lem:com-ideal} applies.
Then, there exist two families of relations: the twisted commutation relations\,\eqref{eq:lemcomrel}, and those between $A_{j}$ and $B_{j}$ for all $j \in \fld$ in Eq.\,\eqref{eq:abrelation}.
Then, we construct a local isomorphism $V_A \otimes V_B$:
Each $V_A,V_B$ is defined by the circuit in Fig.~\ref{fig:swapcircuit} which accepts the given state and the state $\vert 0 \rangle$ on an auxiliary qudit.
For $X := A_{0}^{\dagger}$, $Z:= \omega^{-2} A^{\dagger}_{0} A_{1} $ for $V_A$, and $X := B_{0}^{\dagger}$, $Z:= \omega^{-2} B^{\dagger}_{0} B_{1} $ for $V_B$.
In Appendix~\ref{sec:iso}, we directly calculate, using the two families of relations, that this isometry maps the state $\vert \psi \rangle$ to the rotated Bell state, and the operators $A_{j}$ and $B_{j}$ to the canonical Heisenberg-Weyl operators $T_{(1,j)}$, up to rotational degrees of freedom and some auxiliary space that the experiment acts on trivially.
\end{proof} 
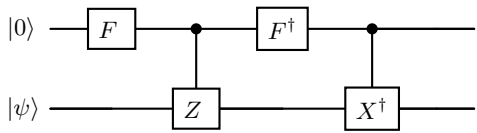
\begin{figure}[ht]
\centering
\begin{quantikz}
\lstick{$\ket{ 0}\,$}& \gate{F\phantom{'}} & \ctrl{1} & \gate{F^{\dagger}} & \ctrl{1} & \qw & \qw \\
\lstick{$\ket{\psi}$}& \qw & \gate{Z\phantom{'}} & \qw & \gate{X^{\dagger}} &\qw&\qw
\end{quantikz}
\caption{Circuit description of a SWAP gate between an input state $\vert \psi \rangle$ and the state $\vert 0 \rangle$ in qudits with the Fourier operation (, a generalized Hadamard matrix), $F = \tfrac{1}{\sqrt{d}} \sum_{i,j \in \fld} \omega^{jk} \vert j \rangle \langle k \vert$, and Heisenberg-Weyl operators $X =T_{(1,0)}, Z= T_{(0,1)} $.}
\label{fig:swapcircuit}
\end{figure}

\paragraph*{Discussion.}

We conclude that robustly self-testing in any prime dimension is possible via a simple self-test that generalizes known ideas from self-testing qubits.
Our contribution is to provide a robust self-test for any odd prime dimension while self-testing qubits robustly is well-studied and understood~\cite{PhysRevA.91.052111,PhysRevA.93.062121,McKague_2012,Wang_2016}
The pivotal element is the novel generalization of the CHSH experiment~\cite{meyer2024bell} that has a simple analytic form and directly uses measurements from operators of the Heisenberg-Weyl algebra.
We conjecture that self-testing most of the algebra separates this Bell test from previous works~\cite{Coladangelo2017,Cui2020generalizationof,Sarkar2021} in higher dimensions that could not prove robustness.
This rises the question is there exists a minimal approach for robustness.

A further strength of our result is that prime-dimensional certification lifts to arbitrary dimension.
Since any finite-dimensional Hilbert space decomposes uniquely into a tensor product of prime-power subsystems, a robust self-test for each prime dimension yields a robust self-test for every composite dimension $d$.
Concretely, if $d=\prod_j p_j^{k_j}$ with primes $p_j$, the system can be regarded as $\bigotimes_j (\mathbb{C}_{p_j})^{\otimes k_j}$.
Our result guarantees that, up to a local isometries, each prime block and its associated measurement set are certified whenever the observed correlations are within $\epsilon$ of the ideal ones.
Because robustness bounds are stable under tensor products—errors accumulating at most linearly in the number of blocks—the full system inherits an $\mathcal{O}(\sum_{j} k_{j} \epsilon)$ robustness.
In other words, by self-testing every prime component independently and then invoking standard stability results for product states and measurements, one certifies the joint state and measurement operators for any composite dimension.
Hence self-testing for $d$ prime is not just sufficient but universal: it provides the modular building blocks from which robust self-tests for all higher-dimensional qudit systems can be assembled.

We want to highlight that in any Bell experiment, there exists a dichotomy of state and measurement captured by the local isometry.
In this way, our protocol also captures self-testing the (not-rotated) maximally entangled Bell state while identifying the Heisenberg-Weyl algebra up to a joint local rotation.
Related work~\cite{PhysRevLett.119.040402} established that only the maximally entangled state can maximally violate certain Bell inequalities, but without extending this to a full, robust self-test valid in arbitrary dimension.

Moreover, the results also verify that the correlations from the Bell operator~\eqref{eq:belloperator5self} are indeed nonlocal, which prior to this work could only be certified numerically for small dimensions.
Specifically, the self-test of the rotated Bell state using the Bell operator~\eqref{eq:belloperator5self} leads to commutation relations that are non-classical.
The positivity of the SOPO decomposition proves that strategies for lower or equal dimension cannot exceed the expectation value.

As an open question, one might consider unitary operators as the non-classical resource that are not diagonal in any of the Heisenberg-Weyl basis and complete the algebra, which could improve the robustness of the self-test.

Lastly, note that our methods can be extended to parallel self-testing~\cite{PhysRevA.93.062121,McKague_2016}.
In~\cite{meyer2024bell}, we consider nonlocality in qudit systems assuming that at least two copies of the state and local joint measurements are available.
This setup allows for particularly strong nonlocal violations based on local contextuality and Wigner negativity, which possibly improves the experiment's noise-robustness.

\paragraph*{Acknowledgements.} We acknowledge funding from Horizon Europe Research and Innovation Actions under Grant Agreement 101080173 CLUSTEC and the Quantum Internet Alliance (QIA) under Grant Agreements 820445, 101080128 and 101102140, as well as from the Plan France 2030 through the ANR-22-PETQ-0006-NISQ2LSQ and ANR-22-PETQ-0007 EPiQ projects.

\bibliographystyle{unsrturl}
\bibliography{main}

\pagebreak
\appendix

\onecolumngrid

\section{Characteristic function}
\label{sec:character5}

The characteristic function used in the Bell operator\,\eqref{eq:belloperator5self} is
\begin{align}
\chi[\left(U\otimes \mathds{1}\right)\vert \Phi \rangle]_{(x_{1},z_{1}),(x_{2},z_{2})} &= \omega^{2^{-1}(x_{1}z_{1}+x_{2}z_{2})} \langle \psi \vert U X^{-x_{1}} U^{\dagger} Z^{-z_{1}}\otimes X^{-x_{2}} Z^{-z_{2}} \vert \psi \rangle \\
&= \frac{1}{d} \, \omega^{2^{-1}(x_{1}z_{1}+z_{2}x_{2})} \sum_{j,k \in \fld}\langle jj \vert U^{\dagger} X^{-x_{1}} U Z^{-z_{1}}\otimes X^{-x_{2}} Z^{-z_{2}} \vert kk \rangle \\
&= \frac{1}{d} \, \omega^{2^{-1}(x_{1}z_{1}+x_{2}z_{2})} \sum_{j,k \in \fld} \omega^{ -(z_{1} + z_{2}) k}\langle jj \vert U^{\dagger} X^{-x_{1}} U \otimes X^{-x_{2}} \vert kk \rangle \\
&= \frac{1}{d} \, \omega^{2^{-1}(x_{1}z_{1}+x_{2}z_{2})} \sum_{j,k \in \fld} \omega^{ -(z_{1} + z_{2}) k} \omega^{ \nu_{k} - \nu_{k-x_{1}}} \langle jj \vert X^{-x_{1}} \otimes X^{-x_{2}} \vert kk \rangle \\
&= \frac{1}{d} \, \omega^{2^{-1}(x_{1}z_{1}+x_{2}z_{2})} \sum_{j,k \in \fld} \omega^{ -(z_{1} + z_{2}) k} \omega^{ \nu_{k} - \nu_{k-x_{1}}} \langle jj \vert k-x_{1}, k-x_{2} \rangle \\
&= \frac{1}{d}\, \delta_{x := x_{1} = x_{2}} \sum_{k \in \fld} \omega^{ (z_{1} + z_{2}) (2^{-1}x-k) + \nu_{k} - \nu_{k-x} } \\
&= \frac{1}{d}\, \delta_{x := x_{1} = x_{2}} \sum_{k \in \fld} \omega^{ (z_{1} + z_{2})k + \nu_{k+ 2^{-1}x} -\nu_{k- 2^{-1}x}}\,.
\end{align}
In particular, given $\nu_{k} = 12^{-1} (k - 3k^{2} + 2k^3)$ as in~\cite{Howard2012} for $d>3$,
\begin{equation}
\chi[\left(U\otimes \mathds{1}\right)\vert \Phi \rangle]_{(x_{1},z_{1}),(x_{2},z_{2})} = \delta_{x := x_{1} = x_{2}} \frac{\epsilon_{d}}{\sqrt{d}} \Big(\tfrac{-2^{-1}x}{d}\Big)\, \omega^{(2x)^{-1}(z_{1}+z_{2})^{2} -2^{-1}(z_{1}+z_{2}) - 24^{-1}(x^3 - x)}\,,
\end{equation}
with the Legendre symbol $\big(\tfrac{\cdot}{d}\big)$ and $\epsilon_d = \begin{cases} 1\,, \qquad d \equiv 1 \bmod 4 \\ i\,,\, \qquad d \equiv 3 \bmod 4 \end{cases}$.

\section{Sum-of-positive-operators decomposition}
\label{sec:decomp}
\paragraph*{For $d>3$:}
We show that there are two sum-of-positive-operators (SOPO) decomposition for the Bell operator\,\eqref{eq:belloperator5self},
\begin{equation}
d(d-1) \mathds{1} - \mathcal{B}_{d} = \sum_{n = 1}^{\tfrac{d-1}{2}} \sum_{j \in \fld} C_{n,j}^{\dagger} C_{n,j}^{\phantom{\dagger}} = \sum_{n = 1}^{\tfrac{d-1}{2}} \sum_{j \in \fld} D_{n,j}^{\dagger} D_{n,j}^{\phantom{\dagger}} \,,
\end{equation}
with
\begin{equation}
C_{n,j} := A_{j}^{n} - \sum_{ k \in \fld } g(j,k,n)^{\ast} \left( B_{k}^{n} \right)^{\dagger} \,, \qquad D_{n,j} := B_{j}^{n} - \sum_{ k \in \fld } g(j,k,n)^{\ast} \left( A_{k}^{n} \right)^{\dagger} \label{eq:sos-j} \,,
\end{equation}
and
\begin{equation}
g(j,k,n) = \frac{1}{d}\sum_{s \in \fld} \omega^{ n(j+k)s + \nu_{s+ 2^{-1}n} - \nu_{s- 2^{-1}n}} \,.
\end{equation}
For brevity, we write $C_{n,j} = A_{j}^{n} - P_{n,j}^{\dagger}$ with $ P_{n,j} = \sum_{ k \in \fld } g(j,k,n) B_{k}^{n}$.
Then, we regard the terms
\begin{align}
\label{ali:generalsosD0gen} P_{n,j} P_{-n,j} &= \sum_{ k,l \in \fld } g(j,k,n) g(j,l,n)^{\ast} B_{k}^{n} \left(B_{l}^{n} \right)^{\dagger} \,,
\end{align}
and calculate, for $n\neq 0$, that
\begin{align}
\sum_{j \in \fld} g(j,k,n) g(j,l,n)^{\ast} &= \frac{1}{d^{2}} \sum_{j,s,t \in \fld } \omega^{ n(j+k)s + \nu_{s+ 2^{-1}n} -\nu_{s- 2^{-1}n} - n(j+l)t - \nu_{t+ 2^{-1}n} + \nu_{t- 2^{-1}n}} \\
&= \frac{1}{d} \sum_{s,t \in \fld } \delta_{s,t}\, \omega^{ nks + \nu_{s+ 2^{-1}n} -\nu_{s- 2^{-1}n} - nlt - \nu_{t+ 2^{-1}n} + \nu_{t- 2^{-1}n}} \\
&= \frac{1}{d} \sum_{s \in \fld } \omega^{ ns (k-l) } \\
\label{eq:verifydecomp-gg}&= \delta_{k,l} \,.
\end{align}
With this, the decomposition is
\begin{align}
\sum_{n = 1}^{\tfrac{d-1}{2}}\sum_{j \in \fld} C_{n,j}^{\dagger} C_{n,j}^{\phantom{\dagger}} &= \sum_{n = 1}^{\tfrac{d-1}{2}} \sum_{j \in \fld} \left( \left( A_{j}^{n} \right)^{\dagger} - P_{n,j} \right) \left(A_{n,j} - \left( P_{n,j} \right)^{\dagger} \right) \\
\label{eq:verifydecom-1}&= \sum_{n = 1}^{\tfrac{d-1}{2}} \sum_{j \in \fld} \left( \left( A_{j}^{n} \right)^{\dagger} A_{j}^{n} - \left( A_{j}^{n} \right)^{\dagger} \left( P_{-n,j} \right)^{\dagger} - A_{j}^{n} P_{n,j} + P_{n,j} \left( P_{-n,j} \right)^{\dagger} \right) \\
\label{eq:verifydecom-2}&= \frac{d(d-1)}{2} - \sum_{n \in \fld^{\ast}} \sum_{ j,k \in \fld } g(j,k,n) A_{j}^{n} B_{k}^{n} + \sum_{n = 1}^{\tfrac{d-1}{2}} \sum_{ j,k,l \in \fld } g(j,k,n) g(j,l,n)^{\ast} B_{k}^{n} \left( B_{l}^{n} \right)^{\dagger} \\
\label{eq:verifydecom-3}&= \frac{d(d-1)}{2} - \mathcal{B}_{d} + \sum_{n = 1}^{\tfrac{d-1}{2}} \sum_{ k,l \in \fld } \delta_{k,l} \,B_{k}^{n} \left( B_{l}^{n} \right)^{\dagger} \\
& = d(d-1) \mathds{1} - \mathcal{B}_{d} \,.
\end{align}
From Eq.\,\eqref{eq:verifydecom-1} to Eq.\,\eqref{eq:verifydecom-2}, we used $A_{j}^{n} \left( A_{j}^{n} \right)^{\dagger} = \mathds{1}$ for all $j,n \in \fld$, and Eq.\,\eqref{ali:generalsosD0gen}.
From Eq.\,\eqref{eq:verifydecom-2} to Eq.\,\eqref{eq:verifydecom-3}, we used that the cross terms $A_{j}^{n} P_{n,j}$ constitute the Bell operator\,\eqref{eq:belloperator5self}, as well as Eq.\,\eqref{eq:verifydecomp-gg}.
Lastly, $B_{k}^{n} \left( B_{l}^{n} \right)^{\dagger} = \mathds{1}$, for all $k,n \in \fld$, verifies the decomposition.

Similarly, it is
\begin{align}
\sum_{n = 1}^{\tfrac{d-1}{2}}\sum_{j \in \fld} D_{n,j}^{\dagger} D_{n,j}^{\phantom{\dagger}} &= \frac{d(d-1)}{2} - \sum_{n \in \fld^{\ast}} \sum_{ j,k \in \fld } g(j,k,n) A_{j}^{n} B_{k}^{n} + \sum_{n = 1}^{\tfrac{d-1}{2}} \sum_{ j,k,l \in \fld } g(j,k,n) g(j,l,n)^{\ast} A_{k}^{n} \left( A_{l}^{n} \right)^{\dagger} \\
&= \frac{d(d-1)}{2} - \mathcal{B}_{d} + \sum_{n = 1}^{\tfrac{d-1}{2}} \sum_{ k,l \in \fld } \delta_{k,l} \,A_{k}^{n} \left( A_{l}^{n} \right)^{\dagger} \\
& = d(d-1) \mathds{1} - \mathcal{B}_{d} \,.
\end{align}

\paragraph*{For $d=3$:}
We use the operators Eq.\,\eqref{eq:sos3-j}, and calculate 
\begin{align}
& C_{0}^\dagger C_{0}^{\phantom{\dagger}} + C_{1}^\dagger C_{1}^{\phantom{\dagger}} + C_{2}^\dagger C_{2}^{\phantom{\dagger}} \\
&= \left[ A_{0}^{2} - \frac{1}{\sqrt{3}}\left( e^{\ii \phi_{1}} B_{0} + e^{\ii \phi_{2}} B_{1} + e^{\ii \phi_{1}} B_{2} \right) \right]\left[ A_{0} - \frac{1}{\sqrt{3}}\left( e^{-\ii \phi_{1}} B_{0}^{2} + e^{-\ii \phi_{2}} B_{1}^{2} + e^{-\ii \phi_{1}} B_{2}^{2} \right) \right] \nonumber\\
&+ \left[ A_{1}^{2} - \frac{1}{\sqrt{3}}\left( e^{\ii \phi_{2}} B_{0} + e^{\ii \phi_{1}} B_{1} + e^{\ii \phi_{1}} B_{2} \right) \right]\left[ A_{1} - \frac{1}{\sqrt{3}}\left( e^{-\ii \phi_{2}} B_{0}^{2} + e^{-\ii \phi_{1}} B_{1}^{2} + e^{-\ii \phi_{1}} B_{2}^{2} \right) \right] \nonumber \\
&+ \left[ A_{2}^{2} - \frac{1}{\sqrt{3}}\left( e^{\ii \phi_{1}} B_{0} + e^{\ii \phi_{1}} B_{1} + e^{\ii \phi_{2}} B_{2} \right) \right]\left[ A_{2} - \frac{1}{\sqrt{3}}\left( e^{-\ii \phi_{1}} B_{0}^{2} + e^{-\ii \phi_{1}} B_{1}^{2} + e^{-\ii \phi_{2}} B_{2}^{2} \right) \right]\\
&= 6 - \mathcal{B}_{3} \nonumber\\
&\phantom{=}~+\frac{1}{3} \Big( e^{-\ii (\phi_{1} - \phi_{2} )} B_{1} B_{0}^{2} + B_{2} B_{0}^{2} + e^{\ii (\phi_{1} - \phi_{2} )} B_{1}^{2} B_{0} + e^{\ii (\phi_{1} - \phi_{2} )} B_{2} B_{1}^{2} + B_{0} B_{2}^{2} e^{-\ii (\phi_{1} - \phi_{2} )} + B_{1} B_{2}^{2} \nonumber \\
&\phantom{\frac{1}{3}\Big(}+ e^{\ii (\phi_{1} - \phi_{2} )} B_{1} B_{0}^{2} + e^{\ii (\phi_{1} - \phi_{2} )} B_{2} B_{0}^{2} + e^{-\ii (\phi_{1} - \phi_{2} )} B_{1}^{2} B_{0} + B_{2} B_{1}^{2} + e^{-\ii (\phi_{1} - \phi_{2} )} B_{0} B_{2}^{2} + B_{1} B_{2}^{2} \nonumber\\
&\phantom{\frac{1}{3}\Big(}+ B_{1} B_{0}^{2} + e^{-\ii (\phi_{1} - \phi_{2} )} B_{2} B_{0}^{2} + B_{1}^{2} B_{0} + e^{-\ii (\phi_{1} - \phi_{2} )} B_{2} B_{1}^{2} + e^{\ii (\phi_{1} - \phi_{2} )} B_{0} B_{2}^{2} + e^{\ii (\phi_{1} - \phi_{2} )} B_{1} B_{2}^{2} \Big) \\
&= 6 - \mathcal{B}_{3} \nonumber\\ 
&\phantom{=}~+\frac{1}{3} \Big( \omega^{2} B_{1} B_{0}^{2} + B_{2} B_{0}^{2} + \omega B_{1}^{2} B_{0} + \omega B_{2} B_{1}^{2} + B_{0} B_{2}^{2} \omega^{2} + B_{1} B_{2}^{2} \nonumber \\
&\phantom{\frac{1}{3}\Big(}+ \omega B_{1} B_{0}^{2} + \omega B_{2} B_{0}^{2} + \omega^{2} B_{1}^{2} B_{0} + B_{2} B_{1}^{2} + \omega^{2} B_{0} B_{2}^{2} + B_{1} B_{2}^{2} \nonumber\\
&\phantom{\frac{1}{3}\Big(}+ B_{1} B_{0}^{2} + \omega^{2} B_{2} B_{0}^{2} + B_{1}^{2} B_{0} + \omega^{2} B_{2} B_{1}^{2} + \omega B_{0} B_{2}^{2} + \omega B_{1} B_{2}^{2} \Big) \\
&= 6 - \mathcal{B}_{3}
\end{align}
In the last line, we used that $1+ \omega + \omega^{2} = 0$.
Equivalently, we can also show that
\begin{align}
& D_{0}^\dagger D_{0}^{\phantom{\dagger}} + D_{1}^\dagger D_{1}^{\phantom{\dagger}} + D_{2}^\dagger D_{2}^{\phantom{\dagger}}\\
&= \left[ B_{0}^{2} - \frac{1}{\sqrt{3}}\left( e^{\ii \phi_{1}} A_{0} + e^{\ii \phi_{2}} A_{1} + e^{\ii \phi_{1}} A_{2} \right) \right]\left[ B_{0} - \frac{1}{\sqrt{3}}\left( e^{-\ii \phi_{1}} A_{0}^{2} + e^{-\ii \phi_{2}} A_{1}^{2} + e^{-\ii \phi_{1}} A_{2}^{2} \right) \right]\nonumber \\
&+ \left[ B_{1}^{2} - \frac{1}{\sqrt{3}}\left( e^{\ii \phi_{2}} A_{0} + e^{\ii \phi_{1}} A_{1} + e^{\ii \phi_{1}} A_{2} \right) \right]\left[ B_{1} - \frac{1}{\sqrt{3}}\left( e^{-\ii \phi_{2}} A_{0}^{2} + e^{-\ii \phi_{1}} A_{1}^{2} + e^{-\ii \phi_{1}} A_{2}^{2} \right) \right]\nonumber \\
&+ \left[ B_{2}^{2} - \frac{1}{\sqrt{3}}\left( e^{\ii \phi_{1}} A_{0} + e^{\ii \phi_{1}} A_{1} + e^{\ii \phi_{2}} A_{2} \right) \right] \left[ B_{2} - \frac{1}{\sqrt{3}}\left( e^{-\ii \phi_{1}} A_{0}^{2} + e^{-\ii \phi_{1}} A_{1}^{2} + e^{-\ii \phi_{2}} A_{2}^{2} \right) \right]\\
&= 6 - \mathcal{B}_{3} \,.
\end{align}

\section{Twisted commutation relations}
\label{sec:comrel}

\paragraph*{For $d>3$:}

For $g(j,k,n)$ from Eq.\,\eqref{eq:gdefinition}, we show the \textit{quadratic folding identity}:
\begin{equation}
g\left(j,k,n+n'\right) =\sum_{r\in \fld} g\left(j,k + n'r, n\right) g\left(j,k - n r, n'\right) \omega^{2^{-1}nn'(n+n')r}
\label{eq:qfid}
\end{equation}
by calculating
\begin{align}
&\sum_{r\in \fld} g\left(j,k + n'r, n\right) g\left(j,k - n r, n'\right) \omega^{2^{-1}nn'(n+n')r} \\
&= \frac{1}{d^{2}} \sum_{r,s,t\in \fld} \omega^{n(j+k+n'r)s +\nu_{s+2^{-1}n} - \nu_{s-2^{-1}n} } \omega^{n'(j+ k - n r)t +\nu_{t+2^{-1}n'} - \nu_{t-2^{-1}n'} } \omega^{2^{-1}nn'(n+n')r} \\
&= \frac{1}{d} \sum_{s,t\in \fld} \delta_{nn'(s-t+2^{-1}(n+n'))} \omega^{n(j+k)s +\nu_{s+2^{-1}n} - \nu_{s-2^{-1}n} } \omega^{n'(j+ k)t +\nu_{t+2^{-1}n'} - \nu_{t-2^{-1}n'} } \\
&= \frac{1}{d} \sum_{s\in \fld} \omega^{n(j+k)s +\nu_{s+2^{-1}n} - \nu_{s-2^{-1}n} + n'(j+ k)(s+2^{-1}(n+n')) +\nu_{s+2^{-1}(n+n')+2^{-1}n'} - \nu_{s+2^{-1}(n+n')-2^{-1}n'} } \\
&= \frac{1}{d} \sum_{s\in \fld} \omega^{(n+n')(j+k)(s+2^{-1} n') -\nu_{s - 2^{-1}n} - \nu_{s+2^{-1}(n+2n')} } \\
&= \frac{1}{d} \sum_{ s \in \fld} \omega^{(n+n')(j+k)s +\nu_{s+2^{-1}(n+n')} - \nu_{s-2^{-1}(n+n')} } \\
&= g\left(j,k, n+n'\right) \,.
\end{align}
Then, since $\left( A_{j}^{n\phantom{'}} \right)^{\dagger} \left( A_{j}^{n'} \right)^{\dagger} = \left( A_{j}^{n+n'} \right)^{\dagger}$, Eq.\,\eqref{eq:abrelation} and Eq.\,\eqref{eq:qfid} lead to
\begin{align}
\label{eq:calcBcomgen-0}&\left[ \left( A_{j}^{n\phantom{'}} \right)^{\dagger} \left( A_{j}^{n'} \right)^{\dagger} - \left( A_{j}^{n+n'} \right)^{\dagger} \right] \vert \psi \rangle \\[0.1cm]
\label{eq:calcBcomgen-1}&= \sum_{ k,l \in \fld } g(j,k,n) g(j,l,n') B_{k}^{n} B_{l}^{n'} \vert \psi \rangle - \sum_{ m \in \fld} g(j,m,n+n') B_{m}^{n+n'} \vert \psi \rangle\\
\label{eq:calcBcomgen-2}&= \sum_{ k,l \in \fld } g(j,k,n) g(j,l,n') B_{k}^{n} B_{l}^{n'} \vert \psi \rangle - \sum_{ m,r \in \fld} g\left(j,m + n'r, n\right) g\left(j,m - n r, n'\right) \omega^{2^{-1}nn'(n+n')r} B_{m}^{n+n'} \vert \psi \rangle \\
\label{eq:calcBcomgen-3} &= \sum_{ k,l \in \fld } g(j,k,n) g(j,l,n') \left( B_{k}^{n} B_{l}^{n'} - \omega^{2^{-1}nn'(k-l)} B_{(n+n')^{-1}(nk+n'l)}^{n+n'} \right) \vert \psi \rangle \\
\label{eq:calcBcomgen-4}&\overset{!}{=} 0 \,,
\end{align}
using the substitution $m+n'r = k,m-nr=l$ such that $(n'+n)r = k-l,(n+n')m=(nk+n'l)$ from Eq.\,\eqref{eq:calcBcomgen-2} to Eq.\,\eqref{eq:calcBcomgen-3}.
Since $\nu$ is not a polynomial of degree less than $3$, $g(j,k,n) \neq 0$ for any $j,k \in\fld$ and $n \in \fld^{\ast}$.
As a result, it is
\begin{equation}
\label{eq:comrelhalfgen}
\left( B_{k}^{n} B_{l}^{n'} - \omega^{2^{-1}nn'(k-l)} B_{(n+n')^{-1}(nk+n'l)}^{n+n'} \right) \vert \psi \rangle = 0
\end{equation}
for all $k,l \in \fld$ and $n,n',n+n' \in \fld^{\ast}$.
In fact one can derive the expression for general $n,n'$ from the case $n=n'=1$.
It will be beneficial however, to use the general case when $\epsilon > 0$ in Lemma~\ref{lem:com-ideal}.
Furthermore, the above can be shown for the operators $A_{j}$, with the exact same reasoning by swapping the operators $B_{j}$ and $A_{j}$, such that
\begin{equation}
\label{eq:comrelhalf}
\left(A_{k} A_{l} - \omega^{2^{-1}(k-l)} A_{2^{-1}(k+l)}^{2} \right) \vert \psi \rangle = 0 \,, \qquad \left(B_{k} B_{l} - \omega^{2^{-1}(k-l)} B_{2^{-1}(k+l)}^{2} \right) \vert \psi \rangle = 0\,,
\end{equation}
for all $l,k \in \fld$.

\paragraph*{For $d=3$:}
We complement the proof by the missing steps, i.e.~how Eq.\,\eqref{eq:heisenbergweylalgebra} follows from Eq.\,\eqref{eq:bconditions}.
From $\langle \psi \vert 6 \mathds{1} - \mathcal{B}_{3} \vert \psi \rangle = 0$, it follows $C_{j} \vert \psi \rangle = 0$ in Eq.\,\eqref{eq:sos3-j} or 
\begin{align}
A_{0}\vert \psi \rangle &= \frac{e^{-\ii \phi_{1}}}{\sqrt{3}} \left( B_{0}^{\dagger} + \omega B_{1}^{\dagger} + B_{2}^{\dagger} \right) \vert \psi \rangle \,, \\
A_{1}\vert \psi \rangle &= \frac{e^{-\ii \phi_{1}}}{\sqrt{3}} \left( \omega B_{0}^{\dagger} + B_{1}^{\dagger} + B_{2}^{\dagger} \right) \vert \psi \rangle \,, \\
A_{2}\vert \psi \rangle &= \frac{e^{-\ii \phi_{1}}}{\sqrt{3}} \left( B_{0}^{\dagger} + B_{1}^{\dagger} + \omega B_{2}^{\dagger} \right) \vert \psi \rangle \,,
\end{align}
where we used the first constraint from Eq.\,\eqref{eq:phaseDifference}, i.e.~$\phi_{1}-\phi_{2} \equiv 2\pi/3$.
Since it must be $A_{j}^{2} = A_{j}^{\dagger}$ for $j\in \{0,1,2\}$, the above lead to the equations
\begin{align}
\left[ \frac{e^{-2\ii \phi_{1}}}{3} \left( B_{0}^{\dagger} + \omega B_{1}^{\dagger} + B_{2}^{\dagger} \right)^{2} - \frac{e^{\ii \phi_{1}}}{\sqrt{3}} \left( B_{0} + \omega^{2} B_{1} + B_{2} \right) \right] \vert \psi \rangle &= 0 \,, \\
\left[ \frac{e^{-2\ii \phi_{1}}}{3} \left( \omega B_{0}^{\dagger} + B_{1}^{\dagger} + B_{2}^{\dagger} \right)^{2} - \frac{e^{\ii \phi_{1}}}{\sqrt{3}} \left( \omega^{2} B_{0} + B_{1} + B_{2} \right) \right] \vert \psi \rangle &= 0 \,, \\
\left[ \frac{e^{-2\ii \phi_{1}}}{3} \left( B_{0}^{\dagger} + B_{1}^{\dagger} + \omega B_{2}^{\dagger} \right)^{2} - \frac{e^{\ii \phi_{1}}}{\sqrt{3}} \left( B_{0} + B_{1} + \omega^{2} B_{2} \right) \right] \vert \psi \rangle &= 0 \,.
\end{align}
The second constraint from Eq.\,\eqref{eq:phaseDifference}, i.e.~$1 - \sqrt{3} e^{3 \ii \phi_{1}} = \omega$ leads to the system of linear equations
\begin{align}
\left[ \omega \{ B_{0}^{2},B_{1}^{2}\} + \phantom{\omega} \{ B_{0}^{2},B_{2}^{2}\} + \omega \{ B_{1}^{2},B_{2}^{2}\} + \omega B_{0} + \phantom{\omega} B_{1} + \omega B_{2} \right] \vert \psi \rangle &= 0 \,, \\
\left[ \phantom{\omega} \{ B_{0}^{2},B_{1}^{2}\} + \omega \{ B_{0}^{2},B_{2}^{2}\} + \omega \{ B_{1}^{2},B_{2}^{2}\} + \phantom{\omega} B_{0} + \omega B_{1} + \omega B_{2} \right] \vert \psi \rangle &= 0 \,, \\
\left[ \omega \{ B_{0}^{2},B_{1}^{2}\} + \omega \{ B_{0}^{2},B_{2}^{2}\} + \phantom{\omega} \{ B_{1}^{2},B_{2}^{2}\} + \omega B_{0} + \omega B_{1} + \phantom{\omega} B_{2} \right] \vert \psi \rangle &= 0 \,.
\end{align}
Rearranging these leads to Eq.\,\eqref{eq:bconditions}, namely,
\begin{equation}
\left[ \{B_{0}^{2},B_{1}^{2}\} + B_{2} \right] \vert \psi \rangle = 0 \,,\quad
\left[ \{B_{0}^{2},B_{2}^{2}\} + B_{1} \right] \vert \psi \rangle = 0 \,,\quad
\left[ \{B_{1}^{2},B_{2}^{2}\} + B_{0} \right] \vert \psi \rangle = 0 \,.
\end{equation}
In the following, we omit the state $\vert \psi \rangle$ but note that all identities hold with respect to their action on it.
First, one finds that
\begin{equation}
B_{0} B_{1} B_{2} = - B_{0} B_{0}^{2} - B_{0} B_{2} B_{1} = - \mathds{1} - B_{0} B_{2} B_{1} = B_{2} B_{0} B_{1} = B_{1} B_{2} B_{0} \,, \label{eq:3bcyclic}
\end{equation}
and for the anti-cyclic permutations,
\begin{equation}
B_{0} B_{2} B_{1} = B_{2} B_{1} B_{0} = B_{1} B_{0} B_{2} \,.
\end{equation}
It immediately follows that
\begin{equation}
[B_{j},B_{0} B_{1} B_{2}] = 0\,,
\end{equation}
for all $j \in \{0,1,2\}$.
Recall the conjugation elements
\begin{equation}
Q:=B_{0}\,B_{1}\,B_{0}^{\dagger}B_{1}^{\dagger}\,, \quad Q^{\prime} = B_{2} B_{0} B_{2}^{\dagger} B_{0}^{\dagger}\,, \quad Q^{\prime\prime} = B_{1} B_{2} B_{1}^{\dagger} B_{2}^{\dagger} \,.
\end{equation}
Multiplying $B_{0}^{\dagger} B_{1}^{\dagger} + B_{1}^{\dagger} B_{0}^{\dagger} = -B_{2}$ from the right by $B_{0} B_{1} $ gives $\mathds{1} + Q = -B_{0} B_{1} B_{2}$.
The same hold of $Q^{\prime}$ and $Q^{\prime\prime}$, i.e.~$\mathds{1} + Q^{\prime} = -B_{2} B_{0} B_{1}$ and $\mathds{1} + Q^{\prime\prime} = -B_{1} B_{2} B_{0}$.
From Eq.\,\eqref{eq:3bcyclic}, we find $Q = Q^{\prime} = Q^{\prime\prime}$ as well as \[[B_{i},Q]=0 \,, \quad \text{for }i = 1,2,3 \,.\]
With the twisted commutation relations it is straightforward to calculate $Q^3 = \mathds{1}$.
As a result, $Q$ commutes with the whole algebra and must be a scalar multiple of the identity:
\begin{equation}
Q=\lambda\mathds{1}\,,\qquad\lambda^{3}=1\,,\qquad \lambda\neq 1\,.
\end{equation}
Choosing the orientation \(0\rightarrow 1 \rightarrow 2 \rightarrow 0 \) sets \(\lambda=\omega\) (the opposite orientation would give \(\omega^{2}\)).
With \(Q = \omega \mathds{1}\), the defining equation becomes the $\omega$-commutation rules
\begin{equation}
B_{0}B_{1}=\omega B_{1}B_{0},\quad
B_{1}B_{2}=\omega B_{2}B_{1},\quad
B_{2}B_{0}=\omega B_{0}B_{2}\,.
\end{equation}
from Eq.\,\eqref{eq:heisenbergweylalgebra}.

\section{Isomorphism}
\label{sec:iso}

We construct the local isometry $V_A \otimes V_B$ where each $V_A,V_B$ is described by the circuit in Fig.~\ref{fig:swapcircuit-tech}.
To prove Theorem~\ref{th:main-ideal}, we spell out the action of circuit layer-by-layer.
We start with two auxiliary qubits and the state $\vert \psi \rangle$, and apply Fourier transform on both auxiliary [1].
Then, apply $A_{0}^{\dagger} A_{1}^{\phantom{\dagger}}$ and $B_{0}^{\dagger} B_{1}^{\phantom{\dagger}}$ operators [2], and again the Fourier transform on both sides [3].
Lastly, apply $A_{0}^{-1}$ and $B_{0}^{-1}$ operators [4].
Namely, it is
\begin{align}
&\phantom{\Rightarrow} \vert \psi \rangle \otimes \vert 0,0 \rangle \nonumber \\[0.1cm]
&\overset{[1]}{\Rightarrow} \frac{1}{d} \sum_{j,k \in \fld} \vert j,k\rangle \otimes \vert \psi \rangle \label{eq:isocirc1} \\
&\overset{[2]}{\Rightarrow} \frac{1}{d} \sum_{j,k \in \fld} \vert j,k\rangle \otimes (\omega^{-2^{-1}j} A_{0}^{\dagger} A_{j}) (\omega^{2^{-1}k} B_{0}^{\dagger} B_{k}) \vert \psi \rangle \label{eq:isocirc2} \\
&\overset{[3]}{\Rightarrow} \frac{1}{d^{2}} \sum_{n,m \in \fld} \vert n,m\rangle \otimes \Big( \sum_{j \in \fld} \omega^{-jn} (\omega^{-2^{-1}j}A_{0}^{\dagger} A_{j}) \Big) \Big( \sum_{k \in \fld} \omega^{-km} (\omega^{-2^{-1}k} B_{0}^{\dagger} B_{k}) \Big) \vert \psi \rangle \label{eq:isocirc3} \\
&\overset{[4]}{\Rightarrow} \frac{1}{d^{2}} \sum_{n,m \in \fld} \vert n,m\rangle \otimes \Big( \left( A_{0}^{n} \right)^{\dagger} \sum_{j \in \fld} \omega^{-jn} (\omega^{-2^{-1}j} A_{0}^{\dagger} A_{j}) \Big) \Big( \left( B_{0}^{m} \right)^{\dagger} \sum_{k \in \fld} \omega^{-km} (\omega^{-2^{-1}k} B_{0}^{\dagger} B_{k}) \Big) \vert \psi \rangle \label{eq:isocirc4}
\end{align}
We evaluate Eq.\,\eqref{eq:isocirc4}, first by permuting the $A_{0}$ and $B_{0}$ with the $A_{1}$ and $B_{1}$ operators, then apply the relations from Eq.\,\eqref{eq:abrelation}.
For convenience, we denote $\Bar{B}_{m} := \left( B_{0}^{m} \right)^{\dagger} \sum_{k \in \fld} \omega^{-km} (\omega^{-2^{-1}k} B_{0}^{\dagger} B_{k}) $,
\begin{align}
\label{eq:isoeval-1} \left( V_{A} \otimes V_{B} \right) ( \vert \psi \rangle) &= \frac{1}{d^{2}} \sum_{n,m \in \fld} \vert n,m \rangle \otimes \Big( \left( A_{0}^{n} \right)^{\dagger} \sum_{j \in \fld} \omega^{-jn} (\omega^{-2^{-1}j} A_{0}^{\dagger} A_{j}) \Big) \Bar{B}_{m} \vert \psi \rangle \\
\label{eq:isoeval-2}&= \frac{1}{d^{2}} \sum_{n,m,j \in \fld} \vert n,m \rangle \otimes \omega^{-2^{-1}jn} A_{-n^{-1}j}^{-n} \Bar{B}_{m} \vert \psi \rangle \\
\label{eq:isoeval-3}&= \frac{1}{d^{3}} \sum_{n,m,j,l,s \in \fld} \vert n,m\rangle \otimes \Bar{B}_{m} \, \omega^{-2^{-1}jn} \omega^{ (j-l) s + \nu_{s+ 2^{-1}n} - \nu_{s- 2^{-1}n} } B_{n^{-1}l}^{n} \vert \psi \rangle \\
\label{eq:isoeval-4}&= \frac{1}{d^{2}} \sum_{n,m,l,s \in \fld} \vert n,m\rangle \otimes \Bar{B}_{m} \, \delta_{s, 2^{-1}n} \, \omega^{ -l s + \nu_{s+ 2^{-1}n} - \nu_{s- 2^{-1}n} } B_{n^{-1}l}^{n} \vert \psi \rangle \\
\label{eq:isoeval-5}&= \frac{1}{d^{2}} \sum_{n,m,l \in \fld} \omega^{ \nu_{n} - \nu_{0} } \vert n,m\rangle \otimes \Bar{B}_{m} \, \omega^{ -2^{-1} n l } B_{n^{-1}l}^{n} \vert \psi \rangle \\
\label{eq:isoeval-6}\Big(\,^{r=n-m}_{m=n+r} \Big)&= \frac{1}{d^{2}} \sum_{n,r,k,l \in \fld} \omega^{ \nu_{n} - \nu_{0} } \vert n, n+r \rangle \otimes \omega^{-kr} B_{0}^{-r} \omega^{-2^{-1}(k+l)n} B_{-n^{-1}k}^{-n} B_{n^{-1}l}^{n} \vert \psi \rangle \\
\label{eq:isoeval-7} &= \frac{1}{d^{2}} \sum_{n,r,k,l \in \fld} \omega^{ \nu_{n} - \nu_{0} } \vert n,n+r \rangle \otimes \omega^{-kr} B_{0}^{-r} \omega^{- 2^{-1}(l+k)} B_{0}^{\dagger} B_{k+l} \vert \psi \rangle \\
\label{eq:isoeval-8} \Big(\,^{t=k+l}_{l=t-k} \Big) &= \frac{1}{d^{2}} \sum_{n,r,k,t \in \fld} \omega^{ \nu_{n} - \nu_{0} } \vert n,n+r \rangle \otimes \omega^{-kr} B_{0}^{-r} \omega^{- 2^{-1}t} B_{0}^{\dagger} B_{t} \vert \psi \rangle \\
\label{eq:isoeval-9} &= \frac{1}{d} \sum_{n,r,t \in \fld} \omega^{ \nu_{n} - \nu_{0} } \vert n,n+r \rangle \otimes \delta_{r,0} B_{0}^{-r} \omega^{- 2^{-1}t} B_{0}^{\dagger} B_{t} \vert \psi \rangle \\
\label{eq:isoeval-10} &= \frac{1}{d} \sum_{n,t \in \fld} \omega^{ \nu_{n} - \nu_{0} } \vert n,n \rangle \otimes \omega^{- 2^{-1}t} B_{0}^{\dagger} B_{t} \vert \psi \rangle \,.
\end{align}
We identify $\vert aux \rangle = \tfrac{1}{\sqrt{d}} \sum_{t\in \fld} \omega^{-2^{-1}t} B_{0}^{\dagger} B_{t} \vert \psi \rangle$ and $ \left( U_{\nu} \otimes \mathds{1} \right) \vert\Phi \rangle = \tfrac{1}{\sqrt{d}} \sum_{n\in \fld}\omega^{\nu_{n}}\vert n,n\rangle$ with the Bell state $\vert \Phi \rangle$ and $U_{\nu} = \sum_{ k \in \fld} \omega^{\nu_{k}} \vert k \rangle \langle k \vert$ and a global phase $\omega^{\nu_{0} }$.
From Eq.\,\eqref{eq:isoeval-1} to Eq.\,\eqref{eq:isoeval-2}, use $A_{-n,0} A_{-1,0} A_{1,j} = \omega^{2^{-1}(n+1)j} A_{-n,j}$.
From Eq.\,\eqref{eq:isoeval-2} to Eq.\,\eqref{eq:isoeval-3}, use Eq.~\eqref{eq:abrelation}.
From Eq.\,\eqref{eq:isoeval-3} to Eq.\,\eqref{eq:isoeval-4}, sum over $j$.
From Eq.\,\eqref{eq:isoeval-4} to Eq.\,\eqref{eq:isoeval-5}, sum over $s$.
From Eq.\,\eqref{eq:isoeval-5} to Eq.\,\eqref{eq:isoeval-6}, unpack $\Bar{B}_{m}$ and substitute $r = n-m$ for $m$.
From Eq.\,\eqref{eq:isoeval-6} to Eq.\,\eqref{eq:isoeval-7}, use $\omega^{-2^{-1}(k+l)n} B_{-n^{-1}k}^{-n} B_{n^{-1}l}^{n} = \omega^{- 2^{-1}(l+k)} B_{0}^{\dagger} B_{k+l}$.
From Eq.\,\eqref{eq:isoeval-7} to Eq.\,\eqref{eq:isoeval-8}, substitute $t = k+l$ for $l$.
From Eq.\,\eqref{eq:isoeval-8} to Eq.\,\eqref{eq:isoeval-9}, sum over $k$.
From Eq.\,\eqref{eq:isoeval-9} to Eq.\,\eqref{eq:isoeval-10}, sum over $m$.

Then, we show that $\left( V_{A} \otimes V_{B} \right) ( A_{u}^{v} \vert \psi \rangle) = \left( T_{(v,vu)} \otimes \mathds{1} \right) \vert \psi \rangle \otimes \vert aux \rangle$:
\begin{align}
\label{eq:isoevalA-1} \left( V_{A} \otimes V_{B} \right) ( A_{u}^{v} \vert \psi \rangle) &= \frac{1}{d^{2}} \sum_{n,m \in \fld} \vert n,m \rangle \otimes \Big( \left( A_{0}^{n} \right)^{\dagger} \sum_{j \in \fld} \omega^{-jn} (\omega^{-2^{-1}j} A_{0}^{\dagger} A_{j}) \Big) \Bar{B}_{m} \left( A_{u}^{v} \vert \psi \rangle \right)\\
\label{eq:isoevalA-2}&= \frac{1}{d^{2}} \sum_{n,m,j \in \fld} \vert n,m \rangle \otimes \omega^{-2^{-1}j(n-v) + 2^{-1}nu} A_{(-n+1)^{-1}(j+uv)}^{-n+v} \Bar{B}_{m} \vert \psi \rangle \\
\label{eq:isoevalA-3}\Big(\,^{o=n-v}_{n=o+v} \Big)&= \frac{1}{d^{2}} \sum_{o,m,j \in \fld} \vert o+v,m \rangle \otimes \omega^{-2^{-1}jo + 2^{-1}(o+v)uv} A_{-o^{-1}(j+uv)}^{-o} \Bar{B}_{m} \vert \psi \rangle \\
\label{eq:isoevalA-4}\Big(\,^{j'=j+uv}_{j=j'-uv} \Big) &= \frac{1}{d^{2}} \sum_{o,m,j' \in \fld} \vert o+v,m \rangle \otimes \omega^{-2^{-1}(j'-uv)o + 2^{-1}(o+v)uv} A_{-o^{-1}j'}^{-o} \Bar{B}_{m} \vert \psi \rangle \\
\label{eq:isoevalA-5} &= \frac{1}{d} \sum_{o,t \in \fld} \omega^{ \nu_{o} - \nu_{0} } \omega^{2^{-1}uv^2} X^{v} Z^{uv} \vert o,o \rangle \otimes \omega^{- 2^{-1}t} B_{0}^{\dagger} B_{t} \vert \psi \rangle \\
\label{eq:isoevalA-6} &= \frac{1}{d} \sum_{o,t \in \fld} \omega^{ \nu_{o} - \nu_{0} } \left( T_{(v,vu)} \otimes \mathds{1} \right) \vert o,o \rangle \otimes \omega^{- 2^{-1}t} B_{0}^{\dagger} B_{t} \vert \psi \rangle \\
\label{eq:isoevalA-7} &= \left( T_{(v,vu)} \otimes \mathds{1} \right) \vert \psi \rangle \otimes \vert aux \rangle \,.
\end{align}
From Eq.~\eqref{eq:isoevalA-1} to Eq.~\eqref{eq:isoevalA-2}, we use that $ \omega^{-jn-2^{-1}j} \left( A_{0}^{n} \right)^{\dagger} A_{0}^{\dagger} A_{j} =\omega^{-2^{-1}j(n-v) + 2^{-1}nu} A_{(-n+1)^{-1}(j+uv)}^{-n+v}$ from Eq.\,\eqref{eq:lemcomrel}.
From Eq.~\eqref{eq:isoevalA-2} to Eq.~\eqref{eq:isoevalA-3}, we substitute $o=n-v$ for $n$.
From Eq.~\eqref{eq:isoevalA-3} to Eq.~\eqref{eq:isoevalA-4}, we substitute $j'=j+uv$ for $j$.
From Eq.~\eqref{eq:isoevalA-4} to Eq.~\eqref{eq:isoevalA-5}, apply the steps from Eqs.\,\eqref{eq:isoeval-2}\,-\,\eqref{eq:isoeval-10}.
From Eq.~\eqref{eq:isoevalA-5} to Eq.~\eqref{eq:isoevalA-6}, we identify the operators $T_{(v,vu)}$.
From Eq.~\eqref{eq:isoevalA-6} to Eq.~\eqref{eq:isoevalA-7}, we identify the states $\vert \psi \rangle$ and $\vert aux \rangle$.
Due to the symmetry of the state and the operators $A_{j} ~ B_{j}$, we can show in the exact same fashion that $\left( V_{A} \otimes V_{B} \right) ( B_{u}^{v} \vert \psi \rangle) = \left( \mathds{1} \otimes T_{(v,vu)} \right) \vert \psi \rangle \otimes \vert aux \rangle$.

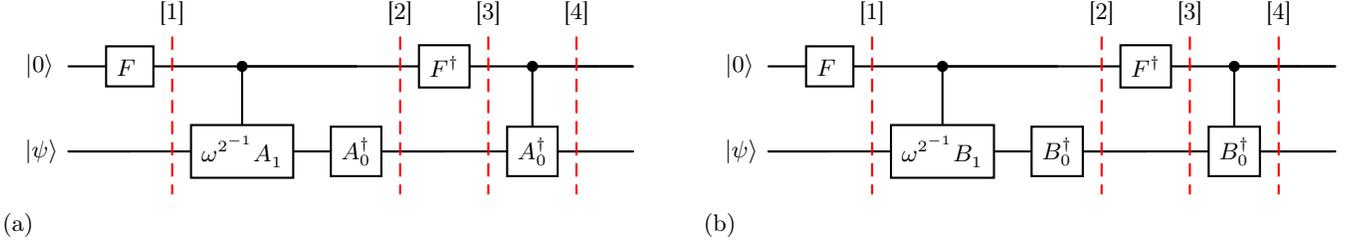
\begin{figure}[ht]
\centering
\begin{subfigure}[b]{0.48\textwidth}
\begin{quantikz}
\lstick{$\ket{ 0}\,$} & \gate{F\phantom{'}} \slice{[1]} & \ctrl{1} & \qw \slice{[2]} & \gate{F^{\dagger}} \slice{[3]} & \ctrl{1} \slice{[4]} & \qw & \qw \\
\lstick{$\ket{\psi}$}& \qw & \gate{\omega^{2^{-1}}A_{1}^{\phantom{\dagger}}} & \gate{A_{0}^{\dagger}} & \qw & \gate{A_{0}^{\dagger}} &\qw & \qw
\end{quantikz}
\caption{}
\label{fig:swapcircuit-tech-a}
\end{subfigure}
\hfill
\begin{subfigure}[b]{0.48\textwidth}
\begin{quantikz}
\lstick{$\ket{ 0}\,$} & \gate{F\phantom{'}} \slice{[1]} & \ctrl{1} & \qw \slice{[2]} & \gate{F^{\dagger}} \slice{[3]} & \ctrl{1} \slice{[4]} & \qw & \qw \\
\lstick{$\ket{\psi}$}& \qw & \gate{\omega^{2^{-1}} B_{1}^{\phantom{\dagger}}} & \gate{B_{0}^{\dagger}} & \qw & \gate{B_{0}^{\dagger}} &\qw & \qw
\end{quantikz}
\caption{}
\label{fig:swapcircuit-tech-b}
\end{subfigure}
\caption{Two circuits describing the action of the isometry $V_{A}$ in~\ref{fig:swapcircuit-tech-a} and $V_{B}$ in~\ref{fig:swapcircuit-tech-b}, both on the states $\vert \psi \rangle$ and $\vert 0 \rangle$, with the Fourier operation (, a generalized Hadamard matrix), $F = \tfrac{1}{\sqrt{d}} \sum_{i,j \in \fld} \omega^{jk} \vert j \rangle \langle k \vert$, and operators $A_{0},A_{1},B_{0},B_{1}$. The slices [1] - [4] correspond to Eqs.\,\eqref{eq:isocirc1} - \eqref{eq:isocirc4}.}
\label{fig:swapcircuit-tech}
\end{figure}

\section{Robust self-testing relations ($d > 3$)}
\label{sec:robust}

Under the assumptions of Lemma~\ref{lem:com-ideal}, $ \langle \psi \vert d(d-1) \mathds{1} - \mathcal{B}_{d} \vert \psi \rangle \leq \epsilon$ for an $\epsilon >0$.
The operator has a sum-of-positive-operators (SOPO) decomposition: $ d(d-1) \mathds{1} - \mathcal{B}_{d} = \sum_{n = 1}^{(d-1)/2} \sum_{j \in \fld} C_{n,j}^{\dagger} C_{n,j}^{\phantom{\dagger}}$ with $C_{n,j}$ from Eq.\,\eqref{eq:sos-j}.
If follows that $\langle \psi \vert C_{n,j}^{\dagger} C_{n,j}^{\phantom{\dagger}} \vert \psi \rangle \leq \epsilon$, and consequently,
\begin{equation}
\sqrt{ \langle \psi \vert C_{n,j}^{\dagger} C_{n,j}^{\phantom{\dagger}} \vert \psi \rangle } = \Vert C_{n,j} \vert \psi \rangle \Vert_{1} \leq \sqrt{\epsilon}
\label{eq:boundccdagger5}
\end{equation}
for all $j\in \fld$, $n \in \fld^{\ast}$, and the trace norm $\Vert \rho \Vert_{1} := \tr \sqrt{ \rho^{\dagger} \rho}$.
Form here onward, we follow the steps of the proof for the ideal case and bound the terms one after the other.

Recall that $C_{n,j} = A_{j}^{n} - P_{n,j}^{\dagger}$ and $P_{n,j} = \sum_{k\in\fld} g(j,k,n) B^{n}_{k}$ with $g(j,k,n) = \tfrac{1}{d}\sum_{s \in \fld}\omega^{ n(j+k)s + \nu_{s+ 2^{-1}n} - \nu_{s- 2^{-1}n}}$.
Applying $\left( A_{j}^{n\phantom{'}} \right)^{\dagger} \left( A_{j}^{n'} \right)^{\dagger} = \left( A_{j}^{n+n'} \right)^{\dagger}$,
\begin{equation}
\begin{aligned}
&\left\Vert \left[ P_{n,j} P_{n',j} - P_{n+n',j} \right] \vert \psi \rangle \right\Vert_{1} \\
&= \left\Vert \left[ P_{n,j} P_{n',j} - P_{n+n',j} + \left( \left( A_{j}^{n+n'} \right)^{\dagger} - \left( A_{j}^{n\phantom{'}} \right)^{\dagger} \left( A_{j}^{n'} \right)^{\dagger} \right) + \left( \left( A_{j}^{n'} \right)^{\dagger} P_{n,j} - \left( A_{j}^{n'} \right)^{\dagger} P_{n,j} \right) \right] \vert \psi \rangle \right\Vert_{1} \\
&\leq \Vert P_{n,j} \Vert \left\Vert\left[ P_{n',j} - \left( A_{j}^{n'} \right)^{\dagger} \right] \vert \psi \rangle \right\Vert_{1} + \left\Vert \left( A_{j}^{n'} \right)^{\dagger} \right\Vert \left\Vert \left[ \left( A_{j}^{n\phantom{'}} \right)^{\dagger} - P_{n,j}\right] \vert \psi \rangle \right\Vert_{1} + \left\Vert \left[ P_{n+n',j} - \left( A_{j}^{n+n'} \right)^{\dagger} \right] \vert \psi \rangle \right\Vert_{1} \\
&\leq \sqrt{\epsilon} \left( \Vert P_{n,j} \Vert + 2 \right) \\
&\leq \sqrt{\epsilon} \left( \gamma(\nu) + 2 \right) \,,
\end{aligned}
\label{eq:pbounds}
\end{equation}
where we introduce $\gamma(\nu) \leq \Vert P_{n,j} \Vert$ for the following reason:
In the worst case, $\gamma(\nu) = d$ due to
\begin{equation}
\left\Vert P_{n,j} \right\Vert = \frac{1}{d} \left\Vert \sum_{k,s\in \fld} \omega^{\iota(j,k,s,\nu)} B_{k}^{n} \right\Vert \leq \frac{1}{d} \sum_{k,s\in \fld} \left\Vert B_{k}^{n} \right\Vert \leq d \,,
\end{equation}
for all $j \in \fld$.
But, we find that, when $\nu$ is a polynomial of degree three, $\mathrm{deg}(\nu) = 3$, it is $\left\Vert P_{n,j} \right\Vert \leq \sqrt{d}$ for all $j \in \fld$.
In this case, $\gamma(\nu) = \sqrt{d}$.
For every $j \in \fld$, Equation\,\eqref{eq:pbounds}, together with the identity~\eqref{eq:qfid}, is then
\begin{equation}
\left\Vert \sum_{ k,l \in \fld } g(j,k,n) g(j,l,n') \left( B_{k}^{n} B_{l}^{n'} - \omega^{2^{-1}nn'(k-l)} B_{(n+n')^{-1}(nk+n'l)}^{n+n'} \right) \vert \psi \rangle \right\Vert_{1} \leq \sqrt{\epsilon} \left( \gamma(\nu) + 2 \right) \,. \label{eq:fml-1}
\end{equation}
Now, we define the vector $R = (R_{k,l} )_{(k,l)\in\fld^{2}}$ with \[ R_{k,l} := B_{k}^{n} B_{l}^{n'} - \omega^{2^{-1}nn'(k-l)} B_{(n+n')^{-1}(nk+n'l)}^{n+n'} \,, \] and the matrix $G_{n} = (g(j,k,n))_{j,k\in\fld}$.
The matrix $G_{n}$ is unitary because $\sum_{j\in\fld} g(j,k,n) g(j,k',n)^{\ast} = \delta_{k,k'}$ and $\sum_{k\in\fld} g(j,k,n) g(j',k,n)^{\ast} = \delta_{j,j'}$.
There exists a basis $\{ T_{(k,l)} \}_{(k,l) \in \fld^{2}}$ such that
\begin{equation}
R_{(k,l)} = \sum_{p,q \in \fld} c_{(k,l)}^{(p,q)}\, T_{(p,q)} \,,
\end{equation}
for coefficients $c_{(k,l)}^{(p,q)}$.
Furthermore, we write 
\begin{equation}
b_{j}^{(p,q)} = \sum_{k,l \in \fld} g(j,k,n) g(j,l,n') \, c_{(k,l)}^{(p,q)} \,.
\end{equation}
Then, we convert the expression to the Hilbert-Schmidt norm
\begin{equation}
\sqrt{ \sum_{p,q \in \fld} \left\vert b_{j}^{(p,q)} \right\vert^2 } = \left\Vert \sum_{k,l \in \fld} b^{(p,q)}_{j} \right\Vert_{2} = \left\Vert \sum_{p,q \in \fld} b_{j}^{(p,q)} T_{(p,q)} \vert \psi \rangle \right\Vert_{2} \leq \sqrt{d} \left\Vert \sum_{p,q \in \fld} b_{j}^{(p,q)} T_{(p,q)} \vert \psi \rangle \right\Vert_{1} \leq \sqrt{d} \sqrt{\epsilon} \left( \gamma(\nu) + 2 \right) \,. \label{eq:fml-2}
\end{equation}
Then, we can bound the norm of the vector of coefficients $c^{(p,q)}_{(k,l)}$.
First,
\begin{equation}
\sqrt{ \sum_{k,l \in \fld} \left\vert c^{(p,q)}_{(k,l)} \right\vert^{2} } = \left\Vert \sum_{k,l \in \fld} c^{(p,q)}_{(k,l)} \right\Vert_{2} = \left\Vert \sum_{k,l \in \fld} ( G_{n} \otimes G_{n'} )^{-1} b^{(p,q)}_{(k,l)} \right\Vert_{2} = \left\Vert \sum_{k,l \in \fld} b^{(p,q)}_{j} \right\Vert_{2} \leq \sqrt{d} \sqrt{\epsilon} \left( \gamma(\nu) + 2 \right) \,,
\end{equation}
and then,
\begin{equation}
\left\vert c^{(p,q)}_{(k,l)} \right\vert \leq \sqrt{d} \sqrt{\epsilon} \left( \gamma(\nu) + 2 \right)
\end{equation}

Then, we bound
\begin{align}
\left\Vert \left( B_{k}^{n} B_{l}^{n'} - \omega^{2^{-1}nn'(k-l)} B_{(n+n')^{-1}(nk+n'l)}^{n+n'} \right) \vert \psi \rangle \right\Vert_{1} = \left\Vert R_{(k,l)} \vert \psi \rangle \right\Vert_{1} &= \Bigg\Vert \sum_{p,q \in \fld} c_{(k,l)}^{(p,q)}\, T_{(p,q)} \vert \psi \rangle \Bigg\Vert_{1} \\
&\leq \sum_{p,q \in \fld} \vert c_{(k,l)}^{(p,q)} \vert \left\Vert T_{(p,q)} \vert \psi \rangle \right\Vert_{1} \\
&\leq \sqrt{d} \sqrt{\epsilon} \left( \gamma(\nu) + 2 \right) \,,
\end{align}
Due to the symmetry of $A$ and $B$, it holds that
\begin{align}
\label{eq:finalboundA}\left\Vert \left( A_{k}^{n} A_{l}^{n'} - \omega^{2^{-1}nn'(k-l)} A_{(n+n')^{-1}(nk+n'l)}^{n+n'} \right) \vert \psi \rangle \right\Vert_{1} &\leq \sqrt{d} \sqrt{\epsilon} \left( \gamma(\nu) + 2 \right)\,,\\
\label{eq:finalboundB}\left\Vert \left( B_{k}^{n} B_{l}^{n'} - \omega^{2^{-1}nn'(k-l)} B_{(n+n')^{-1}(nk+n'l)}^{n+n'} \right) \vert \psi \rangle \right\Vert_{1} &\leq \sqrt{d} \sqrt{\epsilon} \left( \gamma(\nu) + 2 \right)\,.
\end{align}
For the robust $\omega$-commutation relations, we get
\begin{align}
&\left\Vert \left( B_{k}^{n} B_{l}^{n'} - \omega^{nn'(k-l)} B_{l}^{n'} B_{k}^{n} \right) \vert \psi \rangle \right\Vert_{1} \\[0.25cm]
&= \left\Vert \left( B_{k}^{n} B_{l}^{n'} - \omega^{2^{-1}nn'(k-l)} B_{(n+n')^{-1}(nk+n'l)}^{n+n'} + \omega^{2^{-1}nn'(k-l)} B_{(n+n')^{-1}(nk+n'l)}^{n+n'} - \omega^{nn'(k-l)} B_{l}^{n'} B_{k}^{n} \right) \vert \psi \rangle \right\Vert_{1} \\[0.25cm]
&\leq \left\Vert \left( B_{k}^{n} B_{l}^{n'} - \omega^{2^{-1}nn'(k-l)} B_{(n+n')^{-1}(nk+n'l)}^{n+n'} \right) \vert \psi \rangle \right\Vert_{1} + \left\Vert \left( \omega^{2^{-1}nn'(k-l)} B_{k}^{n} B_{l}^{n'} - \omega^{ nn'(k-l)} B_{(n+n')^{-1}(nk+n'l)}^{n+n'} \right) \vert \psi \rangle \right\Vert_{1} \\[0.25cm]
&\leq 2 \sqrt{d} \sqrt{\epsilon} \left( \gamma(\nu) + 2 \right)\,. 
\end{align}
Due to the symmetry of $A$ and $B$, it holds that
\begin{align}
\label{eq:finalboundcomA}\left\Vert \left( A_{k}^{n} A_{l}^{n'} - \omega^{nn'(k-l)} A_{l}^{n'} A_{k}^{n} \right) \vert \psi \rangle \right\Vert_{1} &\leq 2 \sqrt{d} \sqrt{\epsilon} \left( \gamma(\nu) + 2 \right)\,,\\
\label{eq:finalboundcomB}\left\Vert \left( B_{k}^{n} B_{l}^{n'} - \omega^{nn'(k-l)} B_{l}^{n'} B_{k}^{n} \right) \vert \psi \rangle \right\Vert_{1} &\leq 2 \sqrt{d} \sqrt{\epsilon} \left( \gamma(\nu) + 2 \right)\,.
\end{align}

\section{Robust self-testing relations - ($d = 3$)}
\label{sec:robust3}

Under the assumptions of Lemma~\ref{lem:com-ideal}, $ \langle \psi \vert 6 \mathds{1} - \mathcal{B}_{3} \vert \psi \rangle \leq \epsilon$ for an $\epsilon >0$.
The operator has a sum-of-positive-operators (SOPO) decomposition: $ 6 \mathds{1} - \mathcal{B}_{3} = C_{0}^{\dagger} C_{0}^{\phantom{\dagger}} + C_{1}^{\phantom{\dagger}} C_{1}^{\dagger} + C_{2}^{\phantom{\dagger}} C_{2}^{\dagger}$ with $C_{j}$ from Eq.\,\eqref{eq:sos3-j} for $j=0,1,2$.
If follows that $\langle \psi \vert C_{j}^{\dagger} C_{j}^{\phantom{\dagger}} \vert \psi \rangle \leq \epsilon$, and consequently,
\begin{equation}
\sqrt{ \langle \psi \vert C_{j}^{\dagger} C_{j}^{\phantom{\dagger}} \vert \psi \rangle } = \left\Vert C_{j} \vert \psi \rangle \right\Vert_{1} \leq \sqrt{\epsilon}
\label{eq:boundccdagger}
\end{equation}
for $j=0,1,2$, and the trace norm $\Vert \rho \Vert_{1} := \tr \sqrt{ \rho^{\dagger} \rho}$.
Form here onward, we follow the steps of the proof for the ideal case and bound the terms one after the other. 

\subsection{Bound $\Vert [\{ B_{0}^{2}, B_{1}^{2} \} + B_{2} ] \vert \psi \rangle \Vert_{1}$, $\Vert [\{ B_{0}, B_{1} \} + B_{2}^{2} ] \vert \psi \rangle \Vert_{1}$, and index permutations thereof}
The first step is to bound $\Vert [\{ B_{0}^{2}, B_{1}^{2} \} + B_{2} ] \vert \psi \rangle \Vert_{1}$ and index permutations thereof.
To this aim, $A_{0}^{2} = A_{0}^{\dagger}$ leads to
\begin{align}
\label{eq:cbound-0}&\phantom{\leq}~ \left\Vert \left[ \frac{e^{-2\ii\phi_{1}}}{3} \left( B_{0}^{2} + \omega B_{1}^{2} + B_{2}^{2}\right)^{2} - \frac{e^{\ii \phi_{1}}}{\sqrt{3}} \left( B_{0} + \omega^{2} B_{1} + B_{2}\right) \right] \vert \psi \rangle \right\Vert_{1} \\
\label{eq:cbound-1}&\leq \left\Vert \left[ \frac{e^{-2\ii\phi_{1}}}{3} \left( B_{0}^{2} + \omega B_{1}^{2} + B_{2}^{2}\right)^{2} - A_{0}^{\dagger} \right] \vert \psi \rangle \right\Vert_{1} + \left\Vert \left[ A_{0}^{\dagger} - \frac{e^{\ii \phi_{1}}}{\sqrt{3}} \left( B_{0} + \omega^{2} B_{1} + B_{2}\right) \right] \vert \psi \rangle \right\Vert_{1} \\
\label{eq:cbound-2}&\leq \left\Vert \left[ \frac{e^{2\ii\phi_{1}}}{3} \left( B_{0}^{2} + \omega B_{1}^{2} + B_{2}^{2}\right)^{2} - A_{0}^{2} \right] \vert \psi \rangle \right\Vert_{1} + \sqrt{\epsilon} \\
\label{eq:cbound-3}&\leq \left\Vert \left[ \frac{e^{-\ii\phi_{1}}}{\sqrt{3}} \left( B_{0}^{2} + \omega B_{1}^{2} + B_{2}^{2}\right) - A_{0} \right] \vert \psi \rangle \right\Vert_{1} \left\Vert \left[ \frac{e^{-\ii\phi_{1}}}{\sqrt{3}} \left( B_{0}^{2} + \omega B_{1}^{2} + B_{2}^{2}\right) + A_{0} \right] \vert \psi \rangle \right\Vert_{1} + \sqrt{\epsilon} \\
\label{eq:cbound-4}&\leq \sqrt{\epsilon} \, \left\Vert \left[ \frac{e^{-\ii\phi_{1}}}{\sqrt{3}} \left( B_{0}^{2} + \omega B_{1}^{2} + B_{2}^{2}\right) + A_{0} \right] \vert \psi \rangle \right\Vert_{1} + \sqrt{\epsilon}\\[5pt]
\label{eq:cbound-5}&\leq \frac{\sqrt{\epsilon}}{\sqrt{3}} \Big( \left\Vert B_{0}^{2} \right\Vert + \left\Vert B_{1}^{2} \right\Vert + \left\Vert B_{2}^{2} \right\Vert \Big) + \sqrt{\epsilon} \left\Vert A_{0} \right\Vert + \sqrt{\epsilon}\\[8pt]
\label{eq:cbound-6}&\leq \sqrt{\epsilon}\,(\sqrt{3} + 2) \,.
\end{align}
From Eq.\,\eqref{eq:cbound-0} to Eq.\,\eqref{eq:cbound-1}, add $A_{0}^{\dagger} - A_{0}^{\dagger}$ and use the triangle inequality.
From Eq.\,\eqref{eq:cbound-1} to Eq.\,\eqref{eq:cbound-2}, use Eq.\,\eqref{eq:boundccdagger}.
From Eq.\,\eqref{eq:cbound-2} to Eq.\,\eqref{eq:cbound-3}, use the inequality $\Vert D^{2} - E^{2} \Vert \leq \Vert D - E \Vert \Vert D + E \Vert$.
From Eq.\,\eqref{eq:cbound-3} to Eq.\,\eqref{eq:cbound-4}, use Eq.\,\eqref{eq:boundccdagger}.
From Eq.\,\eqref{eq:cbound-4} to Eq.\,\eqref{eq:cbound-5}, use the triangle inequality and $\left\Vert \left[ D \right] \vert \psi \rangle \right\Vert_{1} \leq \left\Vert D \right\Vert$.
From Eq.\,\eqref{eq:cbound-5} to Eq.\,\eqref{eq:cbound-6}, all operators are unitary such that their norm is $\Vert A_{j} \Vert = \Vert B_{j} \Vert =1$ for $j = 0,1,2$.

One can show the same for $A_{1}^{2} = A_{1}^{\dagger}$ and $A_{2}^{2} = A_{2}^{\dagger}$ where the reasoning is equivalent up to permutation of indices such that
\begin{align}
\label{eq:cboundall-1} \left\Vert \left[ \frac{e^{-2\ii\phi_{1}}}{3} \left( B_{0}^{2} + \omega B_{1}^{2} + B_{2}^{2}\right)^{2} - \frac{e^{\ii \phi_{1}}}{\sqrt{3}} \left( B_{0} + \omega^{2} B_{1} + B_{2}\right) \right] \vert \psi \rangle \right\Vert_{1} & \leq \sqrt{\epsilon}(\sqrt{3}+2)\\
\label{eq:cboundall-2} \left\Vert \left[ \frac{e^{-2\ii\phi_{1}}}{3} \left( \omega B_{0}^{2} + B_{1}^{2} + B_{2}^{2}\right)^{2} - \frac{e^{\ii \phi_{1}}}{\sqrt{3}} \left(\omega^{2} B_{0} + B_{1} + B_{2}\right) \right] \vert \psi \rangle \right\Vert_{1} & \leq \sqrt{\epsilon}(\sqrt{3}+2)\\
\label{eq:cboundall-3} \left\Vert \left[ \frac{e^{-2\ii\phi_{1}}}{3} \left( B_{0}^{2} + B_{1}^{2} + \omega B_{2}^{2}\right)^{2} - \frac{e^{\ii \phi_{1}}}{\sqrt{3}} \left( B_{0} + B_{1} + \omega^{2} B_{2}\right) \right] \vert \psi \rangle \right\Vert_{1} & \leq \sqrt{\epsilon}(\sqrt{3}+2)
\end{align}
We evaluate the left-hand side of Eqs.\,\eqref{eq:cboundall-1}-\eqref{eq:cboundall-3} as for the ideal case using $1-\sqrt{3}e^{3\ii \phi_{1}} = \omega$, to arrive at $\left\Vert F_{j} \vert \psi \rangle \right\Vert_{1} \leq 3 \sqrt{\epsilon}(\sqrt{3}+2)$ for $j \in \{ 0,1,2 \}$ with
\begin{align}
\label{eq:bbboundall-1} F_{0} := \omega \{ B_{0}^{2},B_{1}^{2}\} +\phantom{\omega} \{ B_{0}^{2},B_{2}^{2} \} + \omega \{ B_{1}^{2},B_{2}^{2}\} + \omega B_{0} + \phantom{\omega}B_{1} + \omega B_{2} \,, \\
\label{eq:bbboundall-2} F_{1} := \omega \{ B_{0}^{2},B_{1}^{2}\} + \omega \{ B_{0}^{2},B_{2}^{2} \} + \phantom{\omega} \{ B_{1}^{2},B_{2}^{2}\} + \phantom{\omega}B_{0} + \omega B_{1} + \omega B_{2} \,,\\
\label{eq:bbboundall-3} F_{2} := \phantom{\omega} \{ B_{0}^{2},B_{1}^{2}\} + \omega \{ B_{0}^{2},B_{2}^{2} \} + \omega \{ B_{1}^{2},B_{2}^{2}\} + \omega B_{0} + \omega B_{1} + \phantom{\omega}B_{2} \,.
\end{align}
With these,
\begin{align}
\label{eq:boundalgebraB-1}\left\Vert \left[ \{ B_{0}^{2}, B_{1}^{2} \} + B_{2} \right] \vert \psi \rangle \right\Vert_{1} = \frac{1}{3} \left\Vert \left[ F_{0} + F_{1} + \omega F_{2} \right] \vert \psi \rangle \right\Vert_{1} \leq \frac{1}{3} \left( \left\Vert F_{0} \vert \psi \rangle \right\Vert_{1} + \left\Vert F_{1} \vert \psi \rangle \right\Vert_{1} + \left\Vert F_{2} \vert \psi \rangle \right\Vert_{1} \right) &\leq 3\sqrt{\epsilon}(\sqrt{3}+2) \,,\\
\left\Vert \left[ \{ B_{0}^{2}, B_{2}^{2} \} + B_{1} \right] \vert \psi \rangle \right\Vert_{1} = \frac{1}{3} \left\Vert \left[ \omega F_{0} + F_{1} + F_{2} \right] \vert \psi \rangle \right\Vert_{1} \leq \frac{1}{3} \left(\left\Vert F_{0} \vert \psi \rangle \right\Vert_{1} + \left\Vert F_{1} \vert \psi \rangle \right\Vert_{1} + \left\Vert F_{2} \vert \psi \rangle \right\Vert_{1} \right) &\leq \label{eq:boundalgebraB-2}3\sqrt{\epsilon}(\sqrt{3}+2) \,,\\
\label{eq:boundalgebraB-3}\left\Vert \left[ \{ B_{1}^{2}, B_{2}^{2} \} + B_{0} \right] \vert \psi \rangle \right\Vert_{1} = \frac{1}{3} \left\Vert \left[ F_{0} + \omega F_{1} + F_{2} \right] \vert \psi \rangle \right\Vert_{1} \leq \frac{1}{3} \left(\left\Vert F_{0} \vert \psi \rangle \right\Vert_{1} + \left\Vert F_{1} \vert \psi \rangle \right\Vert_{1} + \left\Vert F_{2} \vert \psi \rangle \right\Vert_{1} \right) &\leq 3\sqrt{\epsilon}(\sqrt{3}+2) \,.
\end{align}
The same bounds hold for the complex conjugate of the expressions, which can be achieved by starting from $A_{j} = A^{\dagger}_{j} A^{\dagger}_{j}$ for all $j \in \{ 0,1,2 \}$.
\begin{align}
\label{eq:boundalgebraB-1*} \left\Vert \left[ \{ B_{0}, B_{1} \} + B_{2}^{2} \right] \vert \psi \rangle \right\Vert_{1} &\leq 3\sqrt{\epsilon}(\sqrt{3}+2) \,,\\
\left\Vert \left[ \{ B_{0}, B_{2} \} + B_{1}^{2} \right] \vert \psi \rangle \right\Vert_{1} &\leq \label{eq:boundalgebraB-2*}3\sqrt{\epsilon}(\sqrt{3}+2) \,,\\
\label{eq:boundalgebraB-3*}\left\Vert \left[ \{ B_{1}, B_{2} \} + B_{0}^{2} \right] \vert \psi \rangle \right\Vert_{1} &\leq 3\sqrt{\epsilon}(\sqrt{3}+2) \,.
\end{align}

All other steps follow in the exact same manner.

For later use, we derive the bounds
\begin{align}
\left\Vert \left[ B_{0} B_{1} B_{2} - B_{1} B_{2} B_{0} \right] \vert \psi \rangle \right\Vert_{1}
&\leq \left\Vert \left[ B_{0} B_{1} B_{2} + B_{1} B_{0} B_{2} - B_{1} B_{0} B_{2} - B_{1} B_{2} B_{0} \right] \vert \psi \rangle \right\Vert_{1} \\
\nonumber &= \left\Vert \left[ \{ B_{0}, B_{1}\} B_{2} - B_{1} \{ B_{0}, B_{2}\} \right] \vert \psi \rangle \right\Vert_{1}\\
\nonumber &= \left\Vert \left[ \{ B_{0}, B_{1}\} B_{2} - B_{1} \{ B_{0}, B_{2}\} + (B_{2}^3 - B_{1}^3) \right] \vert \psi \rangle \right\Vert_{1}\\
\label{eq:cyclepermutations-1}&\leq 6\sqrt{\epsilon}(\sqrt{3}+2)\,, \\
\label{eq:cyclepermutations-2}\left\Vert \left[ B_{0} B_{1} B_{2} - B_{2} B_{0} B_{1} \right] \vert \psi \rangle \right\Vert_{1} &\leq 6\sqrt{\epsilon}(\sqrt{3}+2) \,, \\
\label{eq:cyclepermutations-3} \left\Vert \left[ B_{1} B_{2} B_{0} - B_{2} B_{0} B_{1} \right] \vert \psi \rangle \right\Vert_{1} &\leq 6\sqrt{\epsilon}(\sqrt{3}+2) \,, \\[0.25cm]
\label{eq:cyclepermutations-4}\left\Vert \left[ B_{0} B_{2} B_{1} - B_{2} B_{1} B_{0} \right] \vert \psi \rangle \right\Vert_{1} &\leq 6\sqrt{\epsilon}(\sqrt{3}+2)\,, \\
\label{eq:cyclepermutations-5}\left\Vert \left[ B_{0} B_{2} B_{1} - B_{1} B_{0} B_{2} \right] \vert \psi \rangle \right\Vert_{1} &\leq 6\sqrt{\epsilon}(\sqrt{3}+2) \,, \\
\label{eq:cyclepermutations-6} \left\Vert \left[ B_{2} B_{1} B_{0} - B_{1} B_{0} B_{2} \right] \vert \psi \rangle \right\Vert_{1} &\leq 6\sqrt{\epsilon}(\sqrt{3}+2) \,.
\end{align}

\subsection{Commutation elements}
We consider the three commutation elements
\begin{align}
Q &= B_{0} B_{1} B_{0}^{2} B_{1}^{2}\,,\\
Q^{\prime} &= B_{2} B_{0} B_{2}^{2} B_{0}^{2}\,,\\
Q^{\prime\prime} &= B_{1} B_{2} B_{1}^{2} B_{2}^{2}\,,
\end{align}
It is useful to write the commutation elements $Q,Q^{\prime},Q^{\prime\prime}$ as
\begin{align}
Q &= B_{0} B_{1} ( \{ B_{0}^{2},B_{1}^{2} \} + B_{2} ) - \mathds{1} - B_{0} B_{1} B_{2} \,, \\
Q^{\prime} &= B_{2} B_{0} ( \{ B_{2}^{2},B_{0}^{2} \} + B_{1} ) - \mathds{1} - B_{2} B_{0} B_{1} \,, \\
Q^{\prime\prime} &= B_{1} B_{2} ( \{ B_{1}^{2},B_{2}^{2} \} + B_{0} ) - \mathds{1} - B_{1} B_{2} B_{0} \,.
\end{align}
Moreover, note that
\begin{align}
\left\Vert \left[ Q + Q^{\dagger} + \mathds{1} \right] \vert \psi \rangle \right\Vert_{1} &= \left\Vert \left[ \{ B_{0}, B_{1}\} \{ B_{0}^{2},B_{1}^{2} \} - \mathds{1} \right] \vert \psi \rangle \right\Vert_{1} \\
&= \left\Vert \left[ \{ B_{0}, B_{1}\} \{ B_{0}^{2},B_{1}^{2} \} + \{ B_{0}, B_{1}\} B_2 - \{ B_{0}, B_{1}\} B_2 - \mathds{1} \right] \vert \psi \rangle \right\Vert_{1} \\
&\leq 6 \sqrt{\epsilon}(\sqrt{3}+2)\,. \label{eq:Q3rootunity}
\end{align}
In the following, we show that $Q$ is `almost' central:
First, bound
\begin{align}
\left\Vert \left[ [B_{0},Q] \right] \vert \psi \rangle \right\Vert_{1} &= \left\Vert \left[ [ B_{0}, B_0 B_1 (\{B_0^2, B_1^2\} + B_2) - B_0 B_1 B_2 - \mathds{1} ] \right] \vert \psi \rangle \right\Vert_{1} \\
\nonumber&= \Vert [ B_{0}^{2} B_{1} ( \{B_{0}^{2}, B_{1}^{2}\} + B_{2} ) - B_{0}^{2} B_{1} B_{2} \\
\label{eq:boundqBs-1}&\phantom{=}~- B_{0} B_{1} ( \{B_{0}^{2}, B_{1}^{2}\} + B_{2} ) B_{0} + B_{0} B_{1} B_{2} B_{0} ] \vert \psi \rangle \Vert_{1} \\
\label{eq:boundqBs-2}&\leq 6\sqrt{\epsilon}(\sqrt{3}+2) + \Vert [ B_{0}^{2} B_{1} B_{2} - B_{0} B_{1} B_{2} B_{0} ] \vert \psi \rangle \Vert_{1} \\
\label{eq:boundqBs-3}&= 12\sqrt{\epsilon}(\sqrt{3}+2) \\
\label{eq:boundqB2-1} \left\Vert [B_{1},Q] \vert \psi \rangle \right\Vert_{1} &\leq 12\sqrt{\epsilon}(\sqrt{3}+2) \\
\left\Vert [B_{2},Q] \vert \psi \rangle \right\Vert_{1} &\leq 12\sqrt{\epsilon}(\sqrt{3}+2)
\end{align}
In Eq.\,\eqref{eq:boundqBs-1}, we inflated the term, then used\,\eqref{eq:boundalgebraB-1} in Eq.\,\eqref{eq:boundalgebraB-2}, and in Eq.\,\eqref{eq:boundalgebraB-3} the final bound comes from Eq.\,\eqref{eq:cyclepermutations-1}.
Write $S := Q - \frac{1}{3}\tr(Q) \mathds{1}$, then $2\Vert S \Vert \geq \Vert [U,S] \Vert = \Vert [U,Q] \Vert$ for $\Vert U \Vert$.
As a result, $\Vert S \Vert = \frac{1}{2} \sup_{\Vert U \Vert = 1} \Vert [U,Q] \Vert$, and
\begin{align}
\left\Vert \left[ Q - \lambda \mathds{1} \right] \vert \psi \rangle \right\Vert_{1} \leq \left\Vert Q - \lambda \mathds{1} \right\Vert = \frac{1}{2} \sup_{\Vert U \Vert =1 } \left\Vert [Q,U] \right\Vert \leq 6\sqrt{\epsilon} (\sqrt{3}+2) \label{eq:qlambda1bound}
\end{align}
Since $Q$ is unitary, thus normal, we can apply the spectral mapping theorem 
\begin{equation}
\left\Vert Q - \lambda \mathds{1} \right\Vert = \max_{\mu \in \sigma(Q)} \vert \mu - \lambda \vert \leq 6\sqrt{\epsilon} (\sqrt{3}+2) \,.\label{eq:qevall}
\end{equation}
\subsection{Bounding $\Vert Q - \omega^k \mathds{1} \Vert$}
From Eq.\,\eqref{eq:Q3rootunity},
\begin{align}
\vert \lambda + \lambda^{\ast} + 1 \vert &= \left\Vert \left[(\lambda\mathds{1}- Q) + (\lambda^{\ast}\mathds{1}- Q^{\dagger}) + (\mathds{1} + Q + Q^{\dagger}) \right] \vert \psi \rangle \right\Vert_{1} \\
&\leq 18\sqrt{\epsilon} (\sqrt{3}+2) \,.
\end{align}
Write $\lambda = e^{\ii \theta}$, then $ \lambda + \lambda^{\ast} + 1 = 2 \cos{\theta} + 1 $.
Assuming that $\epsilon' := 18\sqrt{\epsilon} (\sqrt{3}+2) < 1/2 < \vert \omega - \omega^{2}\vert$, confines $\theta$ to a $\epsilon'$-neighborhood of $\theta = 2\pi/3$ or $\theta = 4\pi/3$, such that $\vert \lambda - \omega^{k} \vert \leq \epsilon'$ for $k = 1,2$.

More formally, we show, first, that the eigenvalues of $Q$ are `almost' $\omega^{k}$ for $k =1$ or $k=2$, starting with
\begin{align}
\left\Vert \left[ Q^3 - \mathds{1} \right] \vert \psi \rangle \right\Vert_{1} &= \left\Vert \left[ Q^{2} B_{0} B_{1} B_{0}^{2} B_{1}^{2} - Q B_{0} Q B_{1} B_{0}^{2} B_{1}^{2} + Q B_{0} Q B_{1} B_{0}^{2} B_{1}^{2} - \mathds{1} \right] \vert \psi \rangle \right\Vert_{1} \\
&\leq 12\sqrt{\epsilon}(\sqrt{3}+2) + \left\Vert \left[ Q B_{0} Q B_{1} B_{0}^{2} B_{1}^{2} - \mathds{1} \right] \vert \psi \rangle \right\Vert_{1}\\
&= 12\sqrt{\epsilon}(\sqrt{3}+2) + \left\Vert \left[ Q B_{0}^{2} B_{1} B_{0} B_{1}^{2} - \mathds{1} \right] \vert \psi \rangle \right\Vert_{1}\\
&\leq 12\sqrt{\epsilon}(\sqrt{3}+2) + \left\Vert \left[ Q B_{0}^{2} B_{1} B_{0} B_{1}^{2} - B_{0} Q B_{0} B_{1} B_{0} B_{1}^{2} + B_{0} Q B_{0} B_{1} B_{0} B_{1}^{2} - \mathds{1} \right] \vert \psi \rangle \right\Vert_{1}\\
&\leq 24\sqrt{\epsilon}(\sqrt{3}+2) + \left\Vert \left[ B_{0} Q B_{0} B_{1} B_{0} B_{1}^{2} - B_{0}^{2} Q B_{1} B_{0} B_{1}^{2} + B_{0}^{2} Q B_{1} B_{0} B_{1}^{2} - \mathds{1} \right] \vert \psi \rangle \right\Vert_{1}\\
&\leq 36\sqrt{\epsilon}(\sqrt{3}+2) + \left\Vert \left[ B_{0}^{2} Q B_{1} B_{0} B_{1}^{2} - \mathds{1} \right] \vert \psi \rangle \right\Vert_{1}\\
&= 36\sqrt{\epsilon}(\sqrt{3}+2) + \left\Vert \left[ B_{0}^{2} B_{0} B_{1} B_{0}^{2} B_{1}^{2} B_{1} B_{0} B_{1}^{2} - \mathds{1} \right] \vert \psi \rangle \right\Vert_{1}\\
&= 36\sqrt{\epsilon}(\sqrt{3}+2)
\end{align}
As a result, the eigenvalues of $Q$ are close to the third roots of unity $\{ 1, \omega, \omega^{2} \}$.
In particular, the spectral mapping theorem of normals (applies since the $B_{j}$ are unitary) gives
\begin{equation}
\Vert Q^3 - 1 \Vert = \max_{\lambda \in \sigma(Q)} \vert \lambda^3 -1 \vert \leq 36\sqrt{\epsilon}(\sqrt{3}+2) \,.
\end{equation}
In the following, we define $\epsilon^{\prime} = 36\sqrt{\epsilon}(\sqrt{3}+2)$.
From Eq.\,\eqref{eq:qevall}, every eigenvalue $\lambda \in \sigma(Q)$ lies in the union of the three $\epsilon^{\prime}$-disks $D_{\epsilon^{\prime}}(\omega^{k})$ for $k=0,1,2$.
By assuming $\epsilon^{\prime} < 1/2$, we can make the three disk disjoint since $\vert \omega^k - \omega^j \vert = \sqrt{3}$ for $k \neq j$.
Then, it is
\begin{equation}
\vert \lambda - \omega^k \vert \leq \epsilon^{\prime} < \frac{1}{2} \,,
\end{equation}
for all $\lambda$, which implies that all eigenvalues lie in the same disk, i.e.~$k = k^{\ast}$ is fixed.
Finally,
\begin{align}
\Vert Q- \omega^{k^{\ast}} \mathds{1} \Vert &= \max_{\lambda \in \sigma(Q)} \vert \lambda - \omega^{k^{\ast}} \vert \leq \epsilon^{\prime}
\end{align}

Lastly, we show that $k^{\ast} \neq 0$ by contradiction.
So assume $\Vert Q -\mathds{1} \Vert \leq \epsilon^{\prime}$, then $\Vert [B_{0},B_{1}] \Vert = \Vert Q -\mathds{1} \Vert \leq \epsilon^{\prime}$.
For the anticommutator, we use the inequality, given that $U$ and $V$ are unitary, therefore $UV$ is unitary and $\Vert UV \Vert = 1$,
\begin{equation}
\Vert \{ U,V \} \Vert = \Vert UV + 2VU - VU \Vert \geq \vert \Vert [U,V] \Vert - 2 \vert\,.
\end{equation}
Then,
\begin{equation}
\Vert \{B_{0},B_{1} \} \Vert \geq 2 - \Vert [B_{0},B_{1}] \Vert \,,
\end{equation}
and, furthermore,
\begin{equation}
\Vert \{ B_{0}, B_{1} \} + B_{2}^{2} \Vert \geq \vert \Vert \{ B_{0}, B_{1} \} \Vert - \Vert B_{2}^{2} \Vert \vert = \Vert \{ B_{0}, B_{1} \} \Vert - 1 \geq 1 - \epsilon^{\prime} \,.
\end{equation}
This, however, contradicts $\Vert \{ B_{0}, B_{1} \} + B_{2}^{2} \Vert \leq 3\sqrt{\epsilon}(\sqrt{3}+2) $ from Eq.\,\eqref{eq:boundalgebraB-1} for a small enough $\epsilon^{\prime}$.

\subsection{Commutation elements $Q^{\prime}$ and $Q^{\prime\prime}$}

All inequalities and arguments also hold for $Q^{\prime}$ and $Q^{\prime\prime}$ and cyclic permutations of $B_{0},B_{1},B_{2}$.
As a final step, we need to show that $k^{\ast}$ is the same for $Q$, $Q^{\prime}$, and $Q^{\prime\prime}$.
First note that $Q,Q^{\prime},Q^{\prime\prime}$ are mutually close:
\begin{align}
\label{eq:qqclose-1}\left\Vert \left[ Q^{\phantom{\prime}} - Q^{\prime} \right] \vert \psi \rangle \right\Vert_{1} &= \left\Vert \left[ B_{0} B_{1} B_{0}^{2} B_{1}^{2} - B_{2} B_{0} B_{2}^{2} B_{0}^{2} \right] \vert \psi \rangle \right\Vert_{1} \\
\nonumber &= \big\Vert \big[ B_{0} B_{1} ( \{ B_{0}^{2}, B_{1}^{2}\} + B_{2} ) - B_{0} B_{1} B_{2} - \mathds{1} \\ 
\label{eq:qqclose-2}&\phantom{=\big\Vert \big[} - B_{2} B_{0} ( \{ B_{2}^{2}, B_{0}^{2}\} + B_{1} ) + B_{2} B_{0} B_{1} + \mathds{1} \big] \vert \psi \rangle \big\Vert_{1} \\
\label{eq:qqclose-3}&\leq 9 \sqrt{\epsilon} (\sqrt{3} + 2) \,,\\
\left\Vert \left[ Q^{\phantom{\prime}} - Q^{\prime\prime} \right] \vert \psi \rangle \right\Vert_{1} &\leq 9 \sqrt{\epsilon} (\sqrt{3} + 2) \,, \\
\left\Vert \left[ Q^{\prime} - Q^{\prime\prime} \right] \vert \psi \rangle \right\Vert_{1} &\leq 9 \sqrt{\epsilon} (\sqrt{3} + 2) \,.
\end{align}
From Eq.\,\eqref{eq:qqclose-1} to Eq.\,\eqref{eq:qqclose-2}, we inflate the expression.
From Eq.\,\eqref{eq:qqclose-2} to Eq.\,\eqref{eq:qqclose-3}, we use the triangle inequality and the bounds from Eq.\,\eqref{eq:boundalgebraB-1}, Eq.\,\eqref{eq:boundalgebraB-2}, and Eq.\,\eqref{eq:cyclepermutations-1}.
As a result, the eigenvalues of the commutation elements, fixed by $k^{\ast}$, are mutually `almost' the same:
\begin{align}
\vert \omega^{k^{\ast}} - \omega^{k^{\prime\ast}} \vert &= \left\Vert \left[ \omega^{k^{\ast}}\mathds{1} - \omega^{k^{\prime\ast}} \mathds{1} + Q - Q + Q^{\prime} - Q^{\prime} \right] \vert \psi \rangle \right\Vert_{1} \\
&\leq \left\Vert \left[ \omega^{k^{\ast}}\mathds{1} - Q \right] \vert \psi \rangle \right\Vert_{1} + \left\Vert \left[ Q^{\prime} - \omega^{k^{\prime\ast}} \mathds{1} \right] \vert \psi \rangle \right\Vert_{1} + \left\Vert \left[ Q - Q^{\prime} \right] \vert \psi \rangle \right\Vert_{1} \\
&\leq 2\epsilon^{\prime} + 9 \sqrt{\epsilon} (\sqrt{3} + 2) \\
\vert \omega^{k^{\ast}} - \omega^{k^{\prime\prime\ast}} \vert &\leq 2\epsilon^{\prime} + 9 \sqrt{\epsilon} (\sqrt{3} + 2) \\
\vert \omega^{k^{\prime\ast}} - \omega^{k^{\prime\prime\ast}} \vert &\leq 2\epsilon^{\prime} + 9 \sqrt{\epsilon} (\sqrt{3} + 2)
\end{align}
Since $2\epsilon^{\prime} + 9 \sqrt{\epsilon} (\sqrt{3} + 2) = ( 2 + \tfrac{1}{4} ) \epsilon^{\prime} < 1 + \tfrac{1}{8} < \sqrt{3}$, which is the distance between any $\omega^j$ and $\omega^k$ for $j\neq k$, we conclude that $k^{\ast} =k^{\prime\ast} =k^{\prime\prime\ast} \in \{ 1,2 \}$.

\subsection{Twisted commutation relations}
For a small enough $\epsilon$, i.e.~$36\sqrt{\epsilon}(\sqrt{3}+2) < 1/2$, it follows from Eq.\,\eqref{eq:qevall} and its following considerations that 
\begin{align}
\left\Vert [B_{0}B_{1} - \omega^{k^{\ast}} B_{1} B_{0} ] \vert \psi \rangle \right\Vert_{1} &\leq 9\sqrt{\epsilon} (\sqrt{3}+2) \,, \label{eq:finalbound3-1} \\
\left\Vert [B_{2}B_{0} - \omega^{k^{\ast}} B_{0} B_{2} ] \vert \psi \rangle \right\Vert_{1} &\leq 9\sqrt{\epsilon} (\sqrt{3}+2) \,, \label{eq:finalbound3-2} \\
\left\Vert [B_{1}B_{2} - \omega^{k^{\ast}} B_{2} B_{1} ] \vert \psi \rangle \right\Vert_{1} &\leq 9\sqrt{\epsilon} (\sqrt{3}+2) \,, \label{eq:finalbound3-3}
\end{align}
for a $k^{\ast} \in \{ 1,2 \}$.
The choice of $k^{\ast}$ remains, as for the ideal case $\epsilon=0$, a degree of freedom.
Here, we choose $k^{\ast} = 1$.

\section{Robust isometry}

To bound the isometry, we require two bounds: 
First, the bounds on the twisted commutation relations from Eq.\,\eqref{eq:finalboundA} and Eq.\,\eqref{eq:finalboundB}, for $d >3$,
\begin{align}
\left\Vert \left( A_{k}^{n} A_{l}^{n'} - \omega^{2^{-1}nn'(k-l)} A_{(n+n')^{-1}(nk+n'l)}^{n+n'} \right) \vert \psi \rangle \right\Vert_{1} &\leq \sqrt{d} \sqrt{\epsilon} \left( \gamma(\nu) + 2 \right) =: \sqrt{\epsilon} \, \mu_{d}\,,\\
\left\Vert \left( B_{k}^{n} B_{l}^{n'} - \omega^{2^{-1}nn'(k-l)} B_{(n+n')^{-1}(nk+n'l)}^{n+n'} \right) \vert \psi \rangle \right\Vert_{1} &\leq \sqrt{d} \sqrt{\epsilon} \left( \gamma(\nu) + 2 \right) =: \sqrt{\epsilon} \, \mu_{d}\,,
\end{align}
and, for $d=3$, those from Eqs.\,\eqref{eq:finalbound3-1}\,-\,\eqref{eq:finalbound3-3}:
\begin{equation}
\left\Vert [B_{i}B_{j} - \omega^{j-i} B_{j} B_{i} ] \vert \psi \rangle \right\Vert_{1} \leq 9\sqrt{\epsilon} (\sqrt{3}+2) =: \sqrt{\epsilon}\,\mu_{3}\,.
\end{equation}
Second, the bound on the relations between the $A_{j}$ and $B_{j}$ for all $j\in \fld$:
For $d>3$, Eq.\,\eqref{eq:boundccdagger5} and Eq.\,\eqref{eq:sos-j} give
\begin{equation}
\left\Vert \left[ A_{j}^{n} - \sum_{ k \in \fld } g(j,k,n)^{\ast} \left( B_{k}^{n} \right)^{\dagger} \right] \vert \psi \rangle \right\Vert_{1} \leq \sqrt{\epsilon} \,.
\end{equation}
For $d=3$, Eq.\,\eqref{eq:boundccdagger} and Eq.\,\eqref{eq:sos3-j} give
\begin{align}
\left\Vert \left[A_{0} - \frac{1}{\sqrt{3}}\left( e^{-\ii \phi_{1}} B_{0}^{2} + e^{-\ii \phi_{2}} B_{1}^{2} + e^{-\ii \phi_{1}} B_{2}^{2} \right) \right] \vert \psi \rangle \right\Vert_{1} &\leq \sqrt{\epsilon} \,, \\
\left\Vert \left[A_{1} - \frac{1}{\sqrt{3}}\left( e^{-\ii \phi_{2}} B_{0}^{2} + e^{-\ii \phi_{1}} B_{1}^{2} + e^{-\ii \phi_{1}} B_{2}^{2} \right) \right] \vert \psi \rangle \right\Vert_{1} &\leq \sqrt{\epsilon} \,, \\
\left\Vert \left[A_{2} - \frac{1}{\sqrt{3}}\left( e^{-\ii \phi_{1}} B_{0}^{2} + e^{-\ii \phi_{1}} B_{1}^{2} + e^{-\ii \phi_{2}} B_{2}^{2} \right) \right] \vert \psi \rangle \right\Vert_{1} &\leq \sqrt{\epsilon} \,.
\end{align}

Then, we go through Eqs.\,\eqref{eq:isoeval-1}\,-\,\eqref{eq:isoeval-10} and bound them using the triangle inequality:
\begin{align}
\label{eq:isoevalrob-1} &\left\Vert ( V_{A} \otimes V_{B} ) \vert \psi \rangle - \vert \psi \rangle \otimes \vert aux \rangle \right\Vert_{1} \\
\label{eq:isoevalrob-2} &= \left\Vert \frac{1}{d^{2}} \sum_{n,m \in \fld} \vert n,m\rangle \otimes \Big( \left( A_{0}^{n} \right)^{\dagger} \sum_{j \in \fld} \omega^{-jn} (\omega^{-2^{-1}j} A_{0}^{\dagger} A_{j}) \Big) \Bar{B}_{m} \vert \psi \rangle - \vert \psi \rangle \otimes \vert aux \rangle \right\Vert_{1} \\
\label{eq:isoevalrob-3} &\leq \sqrt{\epsilon} \, d(d-1) \mu_{d} + \left\Vert \frac{1}{d^{2}} \sum_{n,m,j \in \fld} \vert n,m\rangle \otimes \omega^{-2^{-1}jn} A_{-n^{-1}j}^{-n} \Bar{B}_{m} \vert \psi \rangle - \vert \psi \rangle \otimes \vert aux \rangle \right\Vert_{1} \\
\label{eq:isoevalrob-4} &\leq \sqrt{\epsilon} \, d(d-1) (\mu_{d} + 1) + \left\Vert \frac{1}{d^{3}} \sum_{n,m,j,l,s \in \fld} \vert n,m\rangle \otimes \Bar{B}_{m} \, \omega^{-2^{-1}jn} \omega^{ (j-l) s + \nu_{s+ 2^{-1}n} - \nu_{s- 2^{-1}n} } B_{n^{-1}l}^{n} \vert \psi \rangle - \vert \psi \rangle \otimes \vert aux \rangle \right\Vert_{1} \\
\label{eq:isoevalrob-5} &\leq \sqrt{\epsilon} \, d(d-1) (2\mu_{d} + 1) + \left\Vert \frac{1}{d^{2}} \sum_{n,r,k,l \in \fld} \omega^{ \nu_{n} - \nu_{0} } \vert n, n+r \rangle \otimes \omega^{-kr} B_{0}^{-r} \omega^{-2^{-1}(k+l)n} B_{-n^{-1}k}^{-n} B_{n^{-1}l}^{n} \vert \psi \rangle - \vert \psi \rangle \otimes \vert aux \rangle \right\Vert_{1} \\
\label{eq:isoevalrob-6} &\leq \sqrt{\epsilon} \, d(d-1) (4\mu_{d} + 1) + \left\Vert \frac{1}{d^{2}} \sum_{n,r,k,l \in \fld} \omega^{ \nu_{n} - \nu_{0} } \vert n,n+r \rangle \otimes \omega^{-kr} B_{0}^{-r} \omega^{- 2^{-1}(l+k)} B_{0}^{\dagger} B_{k+l} \vert \psi \rangle - \vert \psi \rangle \otimes \vert aux \rangle \right\Vert_{1} \\
\label{eq:isoevalrob-7} &= \sqrt{\epsilon} \, d(d-1) (4\mu_{d} + 1) + \left\Vert \frac{1}{d} \sum_{n,t \in \fld} \omega^{ \nu_{n} - \nu_{0} } \vert n,n \rangle \otimes \omega^{- 2^{-1}t} B_{0}^{\dagger} B_{t} \vert \psi \rangle - \vert \psi \rangle \otimes \vert aux \rangle \right\Vert_{1} \\
\label{eq:isoevalrob-8} &= \sqrt{\epsilon} \, d(d-1) (4\mu_{d} + 1) \,.
\end{align}
For the bound on $\left\Vert ( V_{A} \otimes V_{B} ) A_{u}^{v} \vert \psi \rangle - \left(T_{(v,vu)} \otimes \mathds{1}\right) \vert \psi \rangle \otimes \vert aux \rangle \right\Vert_{1}$, $\left\Vert ( V_{A} \otimes V_{B} ) B_{u}^{v} \vert \psi \rangle - \left( \mathds{1} \otimes T_{(v,vu)} \right) \vert \psi \rangle \otimes \vert aux \rangle \right\Vert_{1}$ , we apply the same inequalities as above but, since we apply the $\omega$-relation~\eqref{eq:finalboundA} and~\eqref{eq:finalboundB} twice, it is
\begin{align}
\left\Vert ( V_{A} \otimes V_{B} ) A_{u}^{v} \vert \psi \rangle - \left(T_{(v,vu)} \otimes \mathds{1}\right) \vert \psi \rangle \otimes \vert aux \rangle \right\Vert_{1} &\leq \sqrt{\epsilon} \, d(d-1) (\mu_{d}(4+1/d) + 1) \,, \\
\left\Vert ( V_{A} \otimes V_{B} ) B_{u}^{v} \vert \psi \rangle - \left( \mathds{1} \otimes T_{(v,vu)} \right) \vert \psi \rangle \otimes \vert aux \rangle \right\Vert_{1} &\leq \sqrt{\epsilon} \, d(d-1) (\mu_{d}(4+1/d) + 1) \,.
\end{align}

\end{document}